\documentclass[ASNA,twocolumn]{USG} %twocolumn
\usepackage{anyfontsize} %

\usepackage{colortbl}   % For coloring table lines
\usepackage{xcolor}     % For color definitions

\usepackage{steinmetz}

\usepackage{amsmath}
\usepackage{comment}
\usepackage{soul}

\articletype{ORIGINAL ARTICLE}%
\received{Added at production}
\revised{Added at production}
\accepted{Added at production}
\journal{Advanced Control for Applications}
\volume{00}
\copyyear{2026}
\startpage{1}
\articledoi{}

\begin{document}
\title{Cost of Sensing in Optimal Control: Basic Formulations, Examples, and Applications}
%\subtitle{Subtitle}

%\transtitle{Cost of Sensing in Optimal Control: Basic Formulations and Examples}
%\subtranstitle{trans-subtitle}

%\author[1]{Anna C. Belkina}[https://orcid.org/0000-0001-7037-2721][\facebook{https://www.facebook.com} \linkedin{https://www.linkedin.com} \twitter{https://www.twitter.com}]
%\author[2]{Caroline E. Roe}
%\author[3]{Vera A. Tang}[https://orcid.org/0000-0001-7037-2721]
%\author[4]{Jessica B. Back}[https://orcid.org/0000-0001-7037-2721]
%\author[2]{Claudia Bispo}[https://orcid.org/0000-0001-7037-2721]
%\author[6]{Alexis Conway}[https://orcid.org/0000-0001-7037-2721]

\author{Dung Tran}
\author{Tri Ngo}
\author{Tuhin Das}

\authormark{Tran \textsc{et al.}}
\titlemark{Cost of Sensing in Optimal Control: Basic Formulations and Examples}

\address{\orgdiv{Department of Mechanical and Aerospace Engineering, }\orgname{University of Central Florida, }\orgaddress{Orlando, \state{Florida, }\country{USA}}}

%\address[1]{\orgdiv{Key Laboratory for Ecological Metallurgy of Multimetallic Mineral of Ministry of Education, }\orgname{Institution Name, }%
%\orgaddress{\state{State Name, }\country{Country Name}}}

%\address[2]{\orgdiv{Department Name, }\orgname{Institution Name, }%
%\orgaddress{\state{State Name, }\country{Country Name}}}

%\address[3]{\orgdiv{Department Name, }\orgname{Institution Name, }%
%\orgaddress{\state{State Name, }\country{Country Name}}}

%\address[4]{\orgdiv{Department Name, }\orgname{Institution Name, }%
%\orgaddress{\state{State Name, }\country{Country Name}}}

\corres{Dung Tran  (\email{Dung.Tran@ucf.edu}) ~|~ 
Tuhin Das (\email{Tuhin.Das@ucf.edu})}

\editor{\textbf{Academic Editor:} ... ~|~ \textbf{Guest Editor:} ...}

%\presentaddress{This is sample for present address text this is sample for present address text.}

%\fundingInfo{National Key Research and Development Program of China, Grant/Award Number: 2021YFB1715500; National Natural Science Foundation of China, Grant/Award Number: 12072071; Scientific Research Foundation of Hunan Provincial Education Department, Grant/Award Number: 22A0104; Fundamental Research Funds for the Central Universities, Grant/Award Number: N2225027.}

\keywords{Sensing-Cost, Optimal Control, Two-Point Boundary Value Problem (TPBVP), Pontryagin's Minimum Principle, Shrinking Horizon}

%\transkeywords{Sensing-Cost, Optimal Control, Two-Point Boundary Value Problem (TPBVP), Pontryagin's Minimum Principle, Shrinking Horizon}

\abstract[ABSTRACT]{Incorporating a notion of cost of sensing, or sensing-cost, within the optimal control framework is beneficial in controlling systems where the duration of sensing, and/or the cost of sensors themselves, have a considerable impact on the overall cost. In this regard, this paper presents multiple methods for incorporating an integral sensing-cost into the optimal control framework for Linear Time-Invariant (LTI) systems. Sensing-cost is traded off against the conventional costs of control and stabilization. Optimal sensing intervals are derived by applying the Pontryagin's Minimum Principle. Other formulations of the sensing-cost problem, and extension to nonlinear systems, are possible. The theoretical developments of this paper are validated through numerical solutions and demonstrated through simulations. A reduced-form expression for the infinite-horizon multi-dimensional case with single switching point is derived, and a closed-form solution is obtained for the infinite-horizon first-order case. Additionally, a Shrinking Horizon method is demonstrated for practical implementation of the proposed theory and as a means to address uncertainties. A practical case study of a wastewater treatment plant is introduced to examine the applicability of sensing-cost considerations in a real-world setting.}

%\transabstract[transABSTRACT]{This is a generic template designed for use by multiple journals, which includes several options for customization. Please refer the author guidelines and author LaTeX manuscript preparation document for the journal to which you are submitting in order to confirm that your manuscript will comply with the journal’s requirements. Please replace this text with your abstract. This is sample abstract text just for the template display purpose.}

%\abbr{5-FU, 5-fluorouracil; CFD, computational fluid dynamics; CH, channel; EFS, event-free survival; GBM, glioblastoma multiforme; OS, overall survival; PFS, progression-free survival; SD, standard deviation.}

%\contributed{}

%\dedicated{}

%\copyright{This is an open access article under the terms of the \href{Creative Commons Attribution-NonCommercial}{Creative Commons Attribution-NonCommercial} License, which permits use, distribution and reproduction in any medium, provided the original~work~is~properly cited and is not used for commercial purposes.
%\\[5pt]
%©  2026 The Author(s) \textit{Advanced Control for Applications} published by Wiley Periodicals LLC on behalf of American Institute of Chemical Engineers.}

%\openaccessstatement

\maketitle
%\afterpage{\aftergroup\restoregeometry}

%\onecolumn

%%%%%%%%%%%%%%%%%%%%%%%%%%%%%%%%%%%%%%%%%%%%%%%%%%%%%%%%%%%%%%%%%%%%%%%%%%%%%%%%%%%%%%%%%%%%%%%%%%%%%%%
%%%%%%%%%%%%%%%%%%%%%  End of fields to be completed. Now write! %%%%%%%%%%%%%%%%%%%%%%%%%%%%%%%%%%%%%%

\section{Introduction}
%In control theory and its applications, the cost of sensing plays a pivotal role in shaping the design, implementation, and performance of control systems. Sensing cost encompasses various aspects, including financial expenses, computational overhead, energy consumption, and impacts on system complexity and reliability. This paper aims to explore the implications of sensing cost in control theory, summarizing key methodologies, findings, and trends in this domain.

In conventional optimal control problems, such as in LQR, the performance index typically weighs transient performance vs. control effort. An additional consideration can be the cost of sensing. Cost of sensing, or alternately sensing-cost, plays a pivotal role in shaping the overall design, implementation, and performance of control systems. It encompasses various aspects, including the price of sensors, the computational overhead associated with sensing and processing, energy consumption, and impacts on system complexity and reliability. This paper aims to explore the implications of incorporating sensing-cost within the optimal control framework, summarizing basic formulations, key methodologies, and findings.

Incorporation of sensing-cost within optimal control has not been reported in the literature. However, due to their ubiquitousness and practicality in control systems, numerous works have addressed various facets of sensing, such as sensor placement, scheduling, sampling, etc. While such works are not directly relevant to this paper, we discuss selected ones from literature to put this work in context. For instance, \cite{clark2019greedy} propose optimization techniques for sensor placement, employing greedy algorithms to achieve cost-effective configurations that maintain desired performance. In \cite{das2008shared}, the authors propose the concept of shared-sensing for reversible transducers that are continuously switched between actuator and sensor modes. The authors in \cite{ganesan2017vibration} introduce a method to reduce the cost and power requirements of sensing while streamlining data storage and processing in vibration-based monitoring and diagnostics using compressive sensing. In \cite{gupta2006stochastic}, the authors address collaborative sensing with the goal of determining a sensor schedule that minimizes the error covariance. In \cite{savkin2001problem}, the authors address a sensor scheduling problem that involves estimating the state of an uncertain process using measurements obtained by sequentially switching among a set of noisy sensors.

%Sensors are fundamental to control systems, providing critical information for tasks such as state estimation, feedback control, and fault detection. Accurate and efficient sensing ensures precise knowledge of the system’s state, enabling robust control actions. However, the placement, operation, and management of sensors introduce costs that must be optimized without compromising system performance. The selection and placement of sensors are crucial factors affecting cost and performance. For instance, Clark et al. \cite{clark2018greedy} propose optimization techniques for sensor placement, employing greedy algorithms to achieve cost-effective configurations that maintain desired performance.

%Advancements in computer and communication technologies have enabled the development of a new class of large-scale, resource-constrained wireless embedded control systems. In these systems, it is preferable to limit sensor measurements, control computations, and communications to instances when the system requires attention. However, traditional sampled-data control relies on periodic sensing and actuation, regardless of whether intervention is needed. 

A work which has some conceptual similarity with that presented in this paper is \cite{heemels2012introduction}. Here, the authors introduce event-triggered and self-triggered control systems. Event-triggered control is reactive, initiating sensor sampling and control actions when the system state deviates beyond a predefined threshold. In contrast, self-triggered control takes a proactive approach, predicting the next sampling or actuation instance in advance. 
% The fundamentals of these control strategies are presented, along with a discussion on the distinctions between state feedback and output feedback in event-triggered control. 
The paper demonstrates how these control techniques can be implemented, with examples from applications in the process industry. In this work, the states are sensed regularly to detect triggering, while action (i.e., control computation and new actuation) may only take place when triggered.

In many real-world systems, the challenge lies not in sensor availability but in the cost of continuous sensing—whether in energy consumption, bandwidth usage, signal processing/computational cost or operational constraints. Battery-powered UAVs, underwater autonomous vehicles, wireless sensor networks, and planetary rovers all operate under strict power budgets, where high-frequency sensing (e.g., LiDAR, sonar) can significantly shorten mission duration. Similarly, in networked control systems—often with battery-operated sensors deployed in remote locations—communication costs further constrain sensing schedules, forcing control designers to explicitly consider when sensors should be activated. For example, in \cite{boel2015optimal}, the authors quantify the trade-off between the cost of activating expensive sensors and the resulting improvement in observation accuracy. In such cases, the control agent’s task extends beyond stabilizing system states to also accounting for sensing-costs, with the aim of minimizing the overall operational costs. Optimizing sensing intervals in these contexts is therefore critical to extending mission longevity and maintaining system performance.

However, there is no rigorous study that directly incorporates sensing-cost into the optimal control problem, as proposed in this paper. This approach jointly accounts for sensing-cost, control effort, and state stabilization, ensuring a trade-off between system performance and resource utilization. Here, we present the fundamental formulations of sensing-cost, along with simple examples to illustrate the theoretical development. The contributions of this paper are:
\begin{enumerate}
    \item 
We propose a unified optimization framework that explicitly includes sensing-cost alongside control effort and state stabilization in the cost functional. This framework enables the systematic evaluation of the trade-offs between these competing objectives, addressing the limitations of traditional control design approaches that treat cost of sensing as a secondary concern or ignore it altogether.
\item
A binary switching variable is used to toggle state sensing on and off. Switching is optimized to minimize the cost functional, determining the optimal sequence of sensing actions based on system dynamics and desired performance criteria. We employ the Pontryagin's Minimum Principle to derive the necessary conditions for optimality.
\item
The proposed framework is validated through numerical simulations, demonstrating its effectiveness in balancing sensing-cost and control performance. A Shrinking Horizon approach is demonstrated, providing a practical alternative to implementing the resulting Two-Point Boundary Value Problems (TPBVPs).

\end{enumerate}

The remainder of the paper is organized as follows. Section \ref{sec_ps} defines the problem, introducing the system model, cost functional, and constraints. In Section \ref{sect:theory}, the theoretical development of the proposed framework is detailed, including the derivation of optimality conditions using Pontryagin's Minimum Principle. Section \ref{sec_1sw} examines a specific case of the sensing cost framework for which an analytical expression can be derived for the infinite-horizon optimal problem. This section also presents a closed-form solution for the first-order infinite-horizon case. Numerical examples showcasing the effectiveness of the approach across various scenarios are presented in Section \ref{section:Examples}. To address practical implementation challenges, Section~\ref{section:WWTP} presents an example of incorporating sensing cost into a wastewater treatment plant and Section \ref{sec_sh} introduces a Shrinking Horizon method for dynamic optimization under uncertainties. Finally, Section \ref{sec_conclu} provides concluding remarks and outlines directions for future research.

%%%%%%%%%%%%%%%%%%%%%%%%%%%%%%%%%%%%%%%%%%%%%%%%%%%%%%%%%%%%%%%%
\section{Problem Statement}
\label{sec_ps}
Consider a Linear Time Invariant (LTI) system,
\begin{equation}
\label{eqn:sys}
    \dot{X} = AX + Bu, \quad X \in \mathbb{R}^n, \quad u \in \mathbb{R}^m
\end{equation}
Consider the design of a controller for this system under the condition that sensing of the states $X$ can be switched on or off. We define a scalar switching variable $z$ as follows,
\begin{equation}
    z = \left\{ 
    \begin{array}{ll}
    1 & \quad \mbox{when sensing is on} \\
    0 & \quad \mbox{when sensing is off}
    \end{array}
    \right.
\end{equation}
When $z = 1$, we consider two types of input formulations,
\begin{equation}
    z = 1 \; \Rightarrow \; \left\{ 
    \begin{array}{l}
    \mbox{Case 1: Predefined state feedback $u = -KX$} \\
    \mbox{Case 2: Optimal formulation $u^*$}
    \end{array}
    \right.
    \label{eq_z1cases}
\end{equation}

%\hl{
%During sensing-off intervals, we do not propagate a state estimate. Although model-based propagation is possible under perfect model knowledge and initialization, in practice it becomes an open-loop prediction that is sensitive to modeling errors, especially in the shrinking-horizon setting of Section~}
%\ref{sec_sh}. 
%\hl{Since the sensing-off interval is intended to represent a lack of information rather than state estimation, a constant control input is applied instead:}
Similarly, when $z = 0$ we consider the following two types of input formulations,

\begin{equation}
    z = 0 \; \Rightarrow \; \left\{ 
    \begin{array}{l}
    \mbox{Case 1:} \quad \dot{u} = \left[0\right] \\
    \mbox{Case 2:} \quad u = \left[0\right]
    \end{array}
    \right.
    \label{eq_z0cases}
\end{equation}
We impose that $z$ must be switched to minimize the cost functional,
\begin{equation}
    J = \int_{t_0}^{t_f} L\left( X, u, z \right) dt
    %J = \int_{t_0}^{t_f} \left( X^T Q X + u^T R u + sz^2 \right) dt
    \label{eq_costf1}
\end{equation}
where $t_f$ is a fixed final time. Based on the above problem statement, in the next section we formulate three specific optimal control problems, with examples provided in subsequent sections. Each of the three problems incorporates the cost of sensing and provides a switching law for $z$.

%%%%%%%%%%%%%%%%%%%%%%%%%%%%%%%%%%%%%%%%%%%%%%%%%%%%%%%%%%%%%%%%%
\section{Theoretical Development}
\label{sect:theory}

%%%%%%%%%%%%%%%%%%%%%%%%%%%%%%%%%%%%%%%%%%%%%%%%%%%%%%%%%%%%%%%%%
\subsection{Input u Held Constant when z = 0 and u = -KX when z = 1}
\label{subsection:method1}
%Suppose we have a single-input system:
%\begin{equation}
%    \dot{x}=Ax+Bu
%    \label{eqn:sys}
%\end{equation}
%The new state $z$ indicates the current sensing status. If $z=1$, the feedback control is performed with a predetermined feedback gain $K$ satisfies that both $(A,0)$ and $(A,B)$ are stabilizable. In the reverse case, the sensors are disabled and the control signal is maintained at the last value. 
In this scenario, we formulate the $u$ as follows:
\begin{equation}
    u=-KX \text{ when } z=1 \; \text{and} \;
    \dot{u}=0 \text{ when } z=0 
    \label{eq_z0udot0}
\end{equation}
where $K$ is a predefined gain matrix. From Eq.(\ref{eq_z0udot0}), the derivative of $u$ can be written as,
\begin{equation}
\dot{u} = -zK(AX+Bu)
\end{equation}
We have the augmented state vector $\overline{X}=\left[X\;u\right]^T$ and the following augmented state equation,
\begin{equation}
\dot{\overline{X}} = f(\overline{X}, z) = \left[ 
\begin{array}{rr}
A & B \\
-zKA & \,\, -zKB
\end{array}
\right] \overline{X}
\label{eq_augxbar}
\end{equation}
The objective is to minimize the cost functional,
\begin{equation}
J = \int_{t_0}^{t_f} \!\!\! \left( X^T Q X + sz \right) dt
\label{eq_costf2}
\end{equation} 
where $Q$ is symmetric positive semi-definite, and $s > 0$. We clarify that since a predefined state feedback control is used, Case 1 in Eq.(\ref{eq_z1cases}), hence $J$ does not include a $u^TRu$, $R > 0$ term. Equation (\ref{eq_costf2}) can be expressed as,
\begin{equation}
J = \int_{t_0}^{t_f} \!\!\! \left( \overline{X}^T \overline{Q} \overline{X} + sz \right) dt, \quad
\overline{Q}=\left[
\begin{array}{cc}
	Q & 0 \\
	0 & 0
\end{array}\right]
\label{eq_costf2_a}
\end{equation}
We note in Eq.(\ref{eq_costf2}) that the term $\int_{t_0}^{t_f} sz dt$ imposes a sensing-cost. It is distinct from control cost $\int_{t_0}^{t_f} u^T R u \, dt$ in that $s$ adds to the cost whenever $z = 1$, irrespective of the control input $u$. The Hamiltonian is:
\begin{equation}
    \label{eqn:Hamiltonian}
    H = L + \lambda^T f 
    = \overline{X}^T \overline{Q} \overline{X} + sz + \lambda^T
    \left[ 
        \begin{array}{rr}
        A & B \\
        -zKA & \,\,-zKB
        \end{array}
    \right] \overline{X}
\end{equation}
In Eq.(\ref{eqn:Hamiltonian}), $\lambda=\left[ \lambda_X  \;\lambda_u \right]^T$ is an ($n+m$)-element vector of Lagrange multipliers. 
%The optimal solution can be found by solving the following equation:
%\begin{equation}
%    \label{eqn:optimal_z}
%    \begin{aligned}
%    \frac{\partial H}{\partial z}=0  &\Leftrightarrow 2Sz-\lambda_uK(Ax+Bu)=0 \\
%    &\Leftrightarrow z = \frac{\lambda_uK(Ax+Bu)}{2S}    
%    \end{aligned}
%\end{equation}
%The solution in Eq. \ref{eqn:optimal_z} is infeasible because z can only be either 0 or 1. The Pontryagin's maximum (or minimum) principal (\cite{kirk1970optimal}) can be used to overcome this problem. The solution $z=0$ is optimal if and only if:
To determine a condition for switching $z$, we apply the Pontryagin's minimum principle (\cite{kirk1970optimal, Bryson1975, Stengel94, Lewis2012}), which is a necessary condition for optimality. In the context of this problem, it states that the Hamiltonian must be minimized over the admissible region of $z$ for optimal values of the state and co-state. Since $z$ is constrained to assume discrete integer values, the Pontryagin's minimum principle cannot be directly applied. To address this issue, we allow $z$ to vary continuously in the admissible range $\left[ 0, 1 \right]$. Thereby, we embed our switched system within a larger family of systems, as done in \cite{das2008switched}. For the admissible range of $z$, i.e., $z \in \left[0, 1\right]$, a necessary condition for optimality is found by minimizing the Hamiltonian $H$, Eq.(\ref{eqn:Hamiltonian}). Thus, 
\begin{equation}
    H(\overline{X}^*, \lambda^*, z^*) \le H(\overline{X}^*, \lambda^*, z) \quad \forall \, t \in \left[ t_0, t_f \right]
\end{equation}
for all admissible $z$. We note from Eq.(\ref{eqn:Hamiltonian}) that,
\begin{equation}
    H = \overline{X}^T\overline{Q}\overline{X} + sz + \lambda_X^T \left[ A \quad B\right]\overline{X} - z \lambda_u^T K \left[ A \quad B\right]\overline{X}
\end{equation}
is linear in $z$ for any $\overline{X}^*$ and $\lambda^*$. Therefore, $z^*$ lies at the discrete points $\left\{0, 1\right\}$ for all $t \in \left[ t_0, t_f \right]$, and the formulation of $u$ in Eq.(\ref{eq_z0udot0}) is valid at all instants. The solution $z^*=0$ minimizes $H$ if,
\begin{equation}
    \begin{aligned}
    &\quad\,\left.H(\overline{X}^*, \lambda^*, z)\right|_{z=0} < \left.H(\overline{X}^*, \lambda^*, z)\right|_{z=1} \\
    &\Leftrightarrow  \overline{X}^{*T}\overline{Q}\overline{X}^*+\lambda_X^{*T}\left[ A \quad B\right]\overline{X}^* < \cdots\\
    &\quad\;\; \overline{X}^{*T}\overline{Q}\overline{X}^*+s+\lambda_X^{*T}\left[ A \quad B\right]\overline{X}^*-\lambda_u^{*T}K\left[ A \quad B\right]\overline{X}^*\\
    &\Leftrightarrow \lambda_u^{*T}K\left[ A \quad B\right]\overline{X}^* < s \; \Leftrightarrow \; \lambda_u^{*T}K(AX^*+Bu^*) < s
    \end{aligned}
    \label{eq_sw1a}
\end{equation}
Similarly, $z^*=1$ minimizes $H$ if $\lambda_u^{*T}K(AX^*+Bu^*) \ge s$. Thus, in the sense of minimizing $H$ for all admissible $z$, the condition for optimal switching between sensing ($z = 1$) and no-sensing ($z = 0$) is,
\begin{equation}
    z^* = \left\{ 
    \begin{array}{lll}
    0 & \quad \text{if} \quad & \lambda_u^TK(AX+Bu) < s \\
    1 & \quad \text{if} \quad & \lambda_u^TK(AX+Bu) \ge s
    \end{array}
    \right.
\label{eq_sw1}
\end{equation}
We note that Eq.(\ref{eq_sw1}) is a necessary condition for optimality. The co-states equation and boundary conditions are:
\begin{equation}
    \dot{\lambda}=-\frac{\partial H}{\partial \overline{X}}=-2\overline{Q}\overline{X}-\left[ 
\begin{array}{rr}
A^T & -zA^TK^T \\
B^T & \,\,-zB^TK^T
\end{array}
\right] \lambda
\label{eq_cos1}
\end{equation}
\begin{equation}
    \overline{X}(t_0) = \left[ X_0^T \,\,\; -X_0^TK^T \right]^T, \quad \lambda(t_f)=\frac{\partial \phi}{\partial X}(t_f)=[0]
\label{eq_bc1}
\end{equation}
%Because the initial condition is unknown and the differential equations are nonlinear, there is no general solution for this problem. However, this two-point boundary value problem (TPBVP) can be solved by numerical methods such as "shooting" method or relaxation method (\cite{das2003optimal}).
where $\phi(X(t_f), t_f)$ is the final weighting function, which is set to zero in this study.  Equations (\ref{eq_augxbar}), (\ref{eq_sw1}), (\ref{eq_cos1}) and (\ref{eq_bc1}) form a Two-Point Boundary Value Problem (TPBVP) that can be solved by numerical methods such as the ``Shooting Method" or the ``Relaxation Method" (\cite{Press1992, das2003optimal}). 

%%%%%%%%%%%%%%%%%%%%%%%%%%%%%%%%%%%%%%%%%%%%%%%%%%%%%%%%%%%%%%%%%
\subsubsection*{Observations}
We note the possibility of {\it singular intervals}, \cite{kirk1970optimal, Stengel94}, if the criterion,
\begin{equation}
    \lambda_u^TK(AX+Bu) = s
\label{eq_obs1}
\end{equation}
is satisfied over a finite interval $t\in\left[t_1, t_2\right]$. Another necessary condition for optimality is that if the Hamiltonian is not an explicit function of time, then it is constant along the optimal trajectory for a fixed final-time problem, \cite{kirk1970optimal, Stengel94}. This necessary condition implies,
\begin{equation}
\begin{aligned}
    \dot{H} & = \left( \frac{\partial H}{\partial\overline{X}} \right)^T\dot{\overline{X}} + \left(\frac{\partial H}{\partial \lambda}\right)^T\dot{\lambda} + \frac{\partial H}{\partial z}\dot{z} \\
    &= -\dot{\lambda}^T\dot{\overline{X}} + \dot{\overline{X}}^T\dot{\lambda} + \frac{\partial H}{\partial z}\dot{z} \\
    & = \left[ s - \lambda_u^TK(AX+Bu)\right]\dot{z} = 0
\end{aligned}
\label{eq_obs2}
\end{equation}
From Eqs.(\ref{eq_obs1}) and (\ref{eq_obs2}), we conclude that outside of singular intervals, $\dot{z} = 0$ along optimal trajectories. This confirms that for the proposed problem, where $z \in \left[0, 1\right]$, optimality occurs necessarily at the discrete points, $z \in \left\{ 0, 1 \right\}$ outside of singular intervals. From Eq.(\ref{eq_obs1}), a necessary condition for the existence of a singular interval is,
\begin{equation}
\begin{aligned}
    &\qquad \frac{d}{dt}\left( s - \lambda_u^TK(AX+Bu) \right) = 0 \Rightarrow \left[ \lambda_x^T BK - \lambda_u^T KA \right]\dot{X} = 0
\end{aligned}
\label{eq_obs3}
\end{equation}
over a finite interval $t\in\left[t_1, t_2\right]$. While the above condition does not provide much insight into the singular input $z$, for a scalar system further deductions can be made. For a scalar system, if we represent $A$, $B$, $K$, and $X$ by their scalar equivalents $a$, $b$, $k$, and $x$, then Eq.(\ref{eq_obs3}) reduces to 
\begin{equation*}
    \left( \lambda_x b - \lambda_u a \right) \dot{x} = 0
\end{equation*}
which implies a necessary condition for the existence of singular arcs is either $( \lambda_x b - \lambda_u a ) = 0$, or $\dot{x} = 0$, or both over a finite interval $t\in\left[t_1, t_2\right]$. The former condition can be further differentiated with respect to time to yield the following necessary condition:
\begin{equation}
    \frac{d}{dt}\left( \lambda_x b - \lambda_u a \right) = 0 \; \Rightarrow \; -2bqx = 0
\label{eq_obs4}
\end{equation}
where $q$ is the scalar equivalent of $Q$ defined in Eq.(\ref{eq_costf2}). Assuming the parameters $a$, $b$, $q$, $k$ to be non-zero, Eq.(\ref{eq_obs4}) implies $x = 0$ in $t\in\left[t_1, t_2\right]$, which implies $\dot{x} = 0$, $\ddot{x} = 0$, and so on for higher derivatives in $t\in\left[t_1, t_2\right]$. Equation (\ref{eq_obs4}) still does not produce any condition on $z$ for a singular solution. However, upon differentiation of $\dot{x}$, we have,
\begin{equation}
    \ddot{x} = (a - bkz) \dot{x}
\label{eq_obs5}
\end{equation}
which implies $z^* = a/(bk)$ is a necessary condition for optimality in a singular interval $t\in\left[t_1, t_2\right]$. Thus, for the scalar version, limiting $z \in \left\{ 0, 1 \right\}$ (i.e., sensing vs. no-sensing conditions) in a singular interval may yield sub-optimal solutions. 

A second observation is regarding the inclusion of input/control cost $u^TRu$, $R > 0$, in Eq.(\ref{eq_costf2}). If done, then in Eq.(\ref{eq_costf2_a}) $\bar{Q}~=~\left[ Q \,\, 0 ;\, 0 \,\, R\right]$, and the derivation from Eq.(\ref{eq_sw1a}) to Eq.(\ref{eq_bc1}) would remain unchanged. However, as mentioned after Eq.(\ref{eq_costf2}), since a predefined state feedback control is used, we do not include this term in $J$. Simulations reveal that if incorporated, it complicates the convergence of TPBVP solutions, and hence requires further investigations.

%%%%%%%%%%%%%%%%%%%%%%%%%%%%%%%%%%%%%%%%%%%%%%%%%%%%%%%%%%%%%%%%%%%%
\subsection{Input u = [0] when z = 0 and u = -KX when z~=~1}
\label{subsection:method2}
The method presented in section \ref{subsection:method1} has a disadvantage in that it increases the order of the state equation by $m$. Therefore, the complexity of calculations also increases. This is avoided by imposing the control input to be zero when the states are not sensed, i.e., $u = [0]$ when $z = 0$, as in Case 2 of Eq.(\ref{eq_z0cases}). The objective is to minimize the cost functional of Eq.(\ref{eq_costf2}), since $u$ is still a predetermined input. The state equation takes the form,
\begin{equation}
    \dot{X} = f(X, z) = (A -zBK)X
    \label{eq_st2}
\end{equation}
and the Hamiltonian is,
\begin{equation}
    \label{eqn:Hamiltonian2}
    H=L+\lambda^Tf =X^TQX+sz+\lambda^T\left( A - zBK \right) X
\end{equation}
As in Section \ref{subsection:method1}, for admissible values of $z$, i.e. $z \in \{0, 1\}$, the optimal $z$ is found by first embedding our switched system within a larger family of systems where $z \in [0, 1]$. Thereafter, we minimize the Hamiltonian $H$, Eq.(\ref{eqn:Hamiltonian2}), according to the Pontryagin's minimum principle. Proceeding as in Section \ref{subsection:method1}, the condition for optimal switching between sensing ($z = 1$) and no-sensing ($z = 0$) is,
\begin{equation}
    z^* = \left\{ 
    \begin{array}{lll}
    0 & \quad \text{if} \quad & \lambda^TBKX < s \\
    1 & \quad \text{if} \quad & \lambda^TBKX \ge s
    \end{array}
    \right.
\label{eq_sw2}
\end{equation}
The co-states equation and boundary conditions are:
\begin{equation}
    \dot{\lambda}=-\frac{\partial H}{\partial X}=-2QX - \left( A - zBK\right)^T \lambda
\label{eq_cos2}
\end{equation}
\begin{equation}
    X(t_0) = X_0, \quad \lambda(t_f)=\frac{\partial \phi}{\partial X}(t_f)=[0]
\label{eq_bc2}
\end{equation}
Equations (\ref{eq_st2}), (\ref{eq_sw2}), (\ref{eq_cos2}) and (\ref{eq_bc2}) form a TPBVP that can be solved numerically. 

As in Section \ref{subsection:method1}, a singular interval occurs if $\lambda^TBKX = s$ over a finite interval $t\in\left[t_1, t_2\right]$. A necessary condition for the existence of a singular interval is therefore,
\begin{equation}
    \frac{d}{dt}\left( \lambda^TBKX\right) = 2X^TQBKX + \lambda^T\left( ABK - BKA \right)X = 0
\label{eq_obs6}
\end{equation}
Again, Eq.(\ref{eq_obs6}) does not provide insight into the singular input $z$. Similar to Section \ref{subsection:method1}, further deductions can however be made for the scalar case. Representing $A$, $B$, $K$, $Q$ and $X$ by their scalar equivalents, $a$, $b$, $k$, $q$, and $x$, Eq.(\ref{eq_obs6}) reduces to $2qbkx^2 = 0$. Therefore, the necessary conditions for the existence of a singular interval are, $x = 0$, $\dot{x} = 0$, $\ddot{x} = 0$, and so on, in that interval. We note from Eq.(\ref{eq_st2}) that $z^* = a/(bk)$ ensures $\dot{x} = 0$, and is thus a necessary condition for optimality in singular intervals.  Thus, for the scalar version, limiting $z \in \left\{ 0, 1 \right\}$, as proposed in Eq.(\ref{eq_sw2}), albeit practical, may yield sub-optimal solutions in a singular interval. Finally, an input/control cost $u^TRu$, $R > 0$, can be incorporated in Eq.(\ref{eq_costf2}) by modifying $J$ as follows:
\begin{equation}
    J = \int_{t_0}^{t_f} \!\!\! \left( X^T Q X + u^TRu + sz \right) dt
    = \int_{t_0}^{t_f} \!\!\! \left[ X^T \left( Q + z Q_2 \right) X + sz \right] dt
\label{eq_obs7}
\end{equation}
where, $Q_2 = K^TRK$. However, as mentioned just before Eq.(\ref{eq_st2}), we have not considered the control cost since $u$ is a predefined input. As in Section \ref{subsection:method1}, here also simulations reveal that if included, this term complicates the convergence of TPBVP solutions. This will be an area of future investigations.

%%%%%%%%%%%%%%%%%%%%%%%%%%%%%%%%%%%%%%%%%%%%%%%%%%%%%%%%%%%%%%%%%%%%
\subsection{Input u = [0] when z = 0 and Optimal Input u~=~u* when z = 1}
\label{subsection:method3}
Here, we continue to impose $u$ in Eq.(\ref{eqn:sys}) to be set at $u = [0]$ when the sensors are disabled (i.e., $z=0$). This corresponds to Case 2 of Eq.(\ref{eq_z0cases}), as in Section \ref{subsection:method2}. However, unlike Section \ref{subsection:method2}, the objective here is to use an optimal control $u^*$ when $z=1$. This refers to Case 2 of Eq.(\ref{eq_z1cases}). The state equation takes the form,
\begin{equation}
    \dot{X} = AX + zBu
    \label{eq_st3}
\end{equation}
The goal is to minimize the cost functional of Eq.(\ref{eq_costf1}), with the specific form,
\begin{equation}
J = \int_{t_0}^{t_f} \!\!\! \left( X^T Q X + u^TRu + sz \right) dt
\label{eq_costf4}
\end{equation} 
%Because $u$ and $z$ are both scalars, Eq.(\ref{eqn:cost_func_m2}) can be rewritten as:
%\begin{equation}
%    J = \int_{0}^{t_f}(X^TQX+ru^2+sz^2)dt    
%\end{equation}
From Eqs.(\ref{eq_st3}) and (\ref{eq_costf4}), the Hamiltonian is,
\begin{equation}
    H = L + \lambda^T f = X^T Q X + u^T R u + sz + \lambda^T(AX+zBu)
\label{eq_ham3}
\end{equation}
%We first consider $R$ in Eq.(\ref{eq_costf1}) to be diagonal, i.e. $R = diag(r_i), \; i = 1,2, \cdots m$, in addition to being positive definite. 
As in Sections \ref{subsection:method1} and \ref{subsection:method2}, for admissible values of $z$, i.e. $z \in \{0, 1\}$, the optimal $z$ is found by first embedding our switched system within a larger family of systems where $z \in [0, 1]$. Thereafter, we minimize $H$, Eq.(\ref{eq_ham3}), according to the Pontryagin's minimum principle. The Hamiltonian is quadratic in the unconstrained input $u$. 
%Since $R \in \mathbb{R}^{m \times m}$ is symmetric positive definite, therefore from the \textit{Rayleigh-Ritz inequality}, \cite{Rugh1996}, the term $u^TRu$ in Eq.(\ref{eq_ham3}) satisfies,
%\begin{equation*}
%    0 < \lambda_{min}(R)u^Tu \le u^TRu \le \lambda_{max}(R)u^Tu
%\end{equation*}
%This implies that for a specific value of $z$, $H$ has a minimum for all $u \in \mathbb{R}^m$ when,
Since $R \in \mathbb{R}^{m \times m}$ is symmetric positive definite, therefore for a specific value of $z$, $H$ has a minimum for all $u \in \mathbb{R}^m$ when,
\begin{equation}
    \frac{\partial H}{\partial u} = 2Ru + zB^T\lambda = [0] \; \Rightarrow \; u_{min} = -\frac{1}{2}zR^{-1}B^T\lambda
    \label{eq_umin}
\end{equation}
Now, substituting the above expression of $u = u_{min}$ in Eq.(\ref{eq_ham3}), we have,
\begin{equation}
    H \vert_{u_{min}} = X^T Q X + \lambda^TAX + sz - \frac{1}{4}z^2 \lambda^T B R^{-1} B^T \lambda
\end{equation}
Note that $H \vert_{u_{min}}$ is a quadratic expression 
in $z$. Furthermore, $\frac{1}{4}\lambda^T B R^{-1} B^T \lambda > 0$ for all $\lambda \ne [0]$, since it has a quadratic form and $R^{-1} > 0$ as $R$ is positive definite and symmetric. Hence, $H \vert_{u_{min}}$ has a maximum at,
\begin{equation*}
    \frac{\partial H \vert_{u_{min}}}{\partial z} = 0 \; \Rightarrow \; z_{max} = \frac{2s}{\lambda^T B R^{-1} B^T \lambda} > 0
\end{equation*}
and is monotonically decreasing on either side of $z_{max}$. Therefore, within the admissible range $z \in [0, 1]$, $H$ is minimized at one of the discrete values $z \in \{0, 1\}$. Thus, the optimal input, from Eq.(\ref{eq_umin}), is,
\begin{equation}
    u^* = u_{min} = -\frac{1}{2}zR^{-1}B^T\lambda
    \label{eq_uopt}
\end{equation}
For choosing $z = z^*$, note that the necessary condition for optimality from the Pontryagin's minimum principle is,
\begin{equation*}
    H(X^*, \lambda^*, u^*, z^*) \le H(X^*, \lambda^*, u^*, z) \quad \forall \, t \in \left[ t_0, t_f \right]
\end{equation*}
Hence, we choose $z^* = 0$ when,
\begin{equation}
    \label{eq_sw3}
    \begin{aligned}
        &\quad\left.H\right|_{u_{min}, z=0}<\left.H\right|_{u_{min}, z=1}\\
        &\Leftrightarrow 0 <  s -  \frac{1}{4} \lambda^{*T} B R^{-1} B^T \lambda^* \Leftrightarrow s > \frac{1}{4}\lambda^{*T} B R^{-1} B^T \lambda^*
    \end{aligned}
\end{equation}
Thus, the condition for optimal switching between sensing ($z~=~1$) and no-sensing ($z = 0$) is,
\begin{equation}
    z^* = \left\{ 
    \begin{array}{lll}
    0 & \quad \text{if} \quad & 0.25\lambda^TBR^{-1} B^T \lambda < s \\
    1 & \quad \text{if} \quad & 0.25\lambda^TBR^{-1} B^T \lambda \ge s
    \end{array}
    \right.
\label{eq_sw3a}
\end{equation}
Note that unlike in Sections \ref{subsection:method1} and \ref{subsection:method2}, this scenario does not permit singular intervals. The co-states equation and boundary condition are:
\begin{equation}
    \dot{\lambda}=-\frac{\partial H}{\partial X}=-2QX-A^T\lambda
    \label{eq_cos3}
\end{equation}
\begin{equation}
    X(t_0) = X_0, \quad \lambda(t_f)=\frac{\partial \phi}{\partial X}(t_f)= [0]
    \label{eq_bc3}
\end{equation}
Equations (\ref{eq_st3}), (\ref{eq_uopt}), (\ref{eq_sw3a}), (\ref{eq_cos3}) and (\ref{eq_bc3}) form a TPBVP that can be solved numerically. In Section \ref{section:Examples}, we provide examples of the problems discussed in this section. We present numerical solutions as well as a practical ``shrinking horizon implementation." In the next section, we analyze the sensing-cost problem for a first-order system.

%%%%%%%%%%%%%%%%%%%%%%%%%%%%%%%%%%%%%%%%%%%%%%%%%%%%%%%%%%%%
\section{Infinite Time Case with Single Switching Instant}
\label{sec_1sw}
Scenarios where the sensing-cost problem admits only one switching instant under an infinite time horizon are considered in this section. Specifically, these scenarios are considered within the context of the problem formulation of Section \ref{subsection:method2}, and analytical results are derived. Here, the optimal switching condition in given in Eq.(\ref{eq_sw2}). Define the switching function $\alpha(t)~=~\lambda^T(t)BKX(t)$, noting that $\alpha(t)$ is scalar. The time derivative of $\alpha(t)$ is:
\begin{equation}
\label{eq_Itss1}
    \alpha = \lambda^TBKX \Rightarrow \dot{\alpha} = \left[\frac{\partial \alpha}{\partial \lambda} \right]^T\!\!\!\!\dot{\lambda} + \left[\frac{\partial \alpha}{\partial X} \right]^T\!\!\!\!\dot{X} = (BKX)^T\dot{\lambda} + (\lambda^T BK)\dot{X}
\end{equation}
Substitute equations (\ref{eq_st2}) and (\ref{eq_cos2}) into (\ref{eq_Itss1}), we have:
\begin{equation}
\label{eq_Itss2}
    \begin{aligned}
        \dot{\alpha} &= X^TK^TB^T \left[- 2QX - \left(A - zBK\right)^T\lambda\right] + \lambda^T BK (A-zBK) X \\
        &= -2X^TK^TB^TQX - X^TK^TB^TA^T\lambda + zX^T\left(K^TB^T\right)^2\lambda \\
        &\quad + \lambda^T BKAX - z\lambda^T (BK)^2 X \\
        &= -2X^T(QBK)X + \lambda^T (BKA - ABK)X
    \end{aligned}
\end{equation}
From the equation above, we observe that if $QBK$ is positive (or negative) definite (or semidefinite) and if $A$ and $BK$ commute, the optimal solution will have at most one switching point. In this regard, we state and prove the following Lemma,
\begin{lemma}
    For the LTI system of Eq.(\ref{eqn:sys}) with a single input, i.e., $u \in \mathbb{R}^{1}$, for any $K \in \mathbb{R}^{1\times n}$, a necessary condition for the matrices $A$ and $BK$ to commute is that the controllability matrix $C~=~\left[ B \,\, AB \,\, \cdots \,\, A^{n-1}B\right]$ has a rank of 1.
\end{lemma}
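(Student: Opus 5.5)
The plan is to read the lemma as follows: for a single-input system with $B \ne [0]$ and a gain $K \in \mathbb{R}^{1\times n}$ with $K \ne [0]$, if $ABK = BKA$ then $\mathrm{rank}\, C = 1$. The degenerate cases need to be set aside first. If $K=[0]$, then $BK=[0]$ commutes with every $A$, so no conclusion is possible. If $B=[0]$, then $C=[0]$ has rank $0$. Nonzero $B$ and $K$ are also the only cases of interest for the switching-function analysis of Eq.(\ref{eq_Itss2}).

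The key observation is that with $u\in\mathbb{R}^1$, both sides of the commutation relation are rank-one outer products. Here $B$ is a column vector and $K$ is a row vector, so
\begin{equation*}
ABK = (AB)K, \qquad BKA = B(KA).
\end{equation*}
The left side is the column $AB$ times the row $K$. The right side is the column $B$ times the row $KA$.

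Since $K \ne [0]$, I would pick a vector $w \in \mathbb{R}^n$ with $Kw \ne 0$, for example $w = K^T$. Applying both sides of $ABK = BKA$ to $w$ gives
\begin{equation*}
AB\,(Kw) = B\,(KAw).
\end{equation*}
Both $Kw$ and $KAw$ are scalars, so this yields $AB = \mu B$ with $\mu = (KAw)/(Kw) \in \mathbb{R}$. In other words, $B$ is an eigenvector of $A$, or else $AB = [0]$, which is the case $\mu = 0$.

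By induction on $k$, $A^kB = \mu^k B$ for $k = 0,1,\dots,n-1$. Hence every column of $C = [B \;\, AB \;\, \cdots \;\, A^{n-1}B]$ is a scalar multiple of $B$. Since $B \ne [0]$, this gives $\mathrm{rank}\, C = 1$.

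I do not expect a real obstacle, since the argument is short linear algebra. The only delicate point is stating the hypotheses precisely: $B \ne [0]$ and $K \ne [0]$ must be assumed, or else the lemma should be read as a necessary condition for $A$ and $BK$ to commute for every $K$. Under that second reading, choosing any single nonzero $K$ reduces to the argument above. I would also remark that the condition is not sufficient. It gives $AB = \mu B$, and commutation additionally requires $KA = \mu K$, i.e.\ $K^T$ must be a left eigenvector of $A$ for the same eigenvalue $\mu$.
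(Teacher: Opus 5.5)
Your proof is correct, and it reaches the stated implication by a different and more direct route than the paper.

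The paper's argument runs in the wrong direction. It first notes that $B$ is always an eigenvector of $BK$, with eigenvalue $\rho = KB$. It then \emph{assumes} $\mathrm{rank}\,C = 1$ to obtain $AB = \bar{\rho}B$. Finally it appeals to the claim that commuting matrices have the same set of eigenvectors, so that $B$ being a common eigenvector is ``consistent'' with commutation. That is a consistency check in the converse direction. The cited fact also fails in general ($I$ commutes with every matrix) and needs extra hypotheses such as distinct eigenvalues.

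You instead use that $ABK = (AB)K$ and $BKA = B(KA)$ are both rank-one outer products. Evaluating them on $w = K^T$ gives $AB = \mu B$ directly from commutation, and then $A^kB = \mu^k B$.

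The paper's eigenvector idea can be repaired to run in the right direction when $KB \neq 0$. In that case the $\rho$-eigenspace of the rank-one matrix $BK$ is $\mathrm{span}(B)$, and $A$ maps it into itself because $BK(AB) = A(BKB) = \rho\, AB$. When $KB = 0$, however, the relevant eigenspace of the nilpotent $BK$ is $\ker K$, and for $n \ge 3$ this no longer pins down $AB$. Your choice of $w$ needs only $K \neq [0]$, so it covers that case as well.

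Your exclusion of $K = [0]$ and $B = [0]$ is warranted. With $K = [0]$ and $(A,B)$ controllable, the lemma as literally stated (``for any $K$'') is false.

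Your closing remark is also a genuine addition the paper lacks. For nonzero $B$ and $K$, commutation is equivalent to $AB = \mu B$ together with $KA = \mu K$.
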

\textbf{Proof:} Note that,
\begin{equation}
    BKB = B(KB) = \rho B, \qquad \rho = KB
\end{equation}
Thus, $\rho = KB$ is an eigenvalue of $BK$ and $B$ is the corresponding eigenvector. Since $C$ is rank 1, therefore,
\begin{equation}
    AB = \bar{\rho}B
\end{equation}
Thus, $\bar{\rho}$ is an eigenvalue of $A$ and $B$ is the corresponding eigenvector. Hence, the matrices $A$ and $BK$ have corresponding eigenvalues of $\bar{\rho}$ and $\rho$ with the common eigenvector $B$. Since two commuting matrices must have the same set of eigenvectors, \cite{Horn13}, the above mentioned necessary condition is satisfied by the eigenvector $B$. \hfill $\square$

Let us assume that the following three conditions are valid,
\begin{enumerate}
    \item $A$ and $(A-BK)$ are Hurwitz
    \item $A$ and $BK$ commute
    \item $QBK$ is positive semidefinite
\end{enumerate}
Then, from Eq.(\ref{eq_Itss2}), it is guaranteed that the optimal solution has at most one switching point. We denote the switching instant as $t_{sw}$. From Eq.(\ref{eq_st2}), for $t \leq t_{sw}$, the system equation is $\dot{X} = (A-BK)X$ and the solution is, $X(t)=e^{(A-BK)t}X_0 \ \forall t \in \left[0, t_{sw} \right]$. The system switches to $\dot X = AX$ at $t = t_{sw}$, the solution in this interval is:
\begin{equation}
\label{eqn:multidim_Xt}
    X(t) = e^{A(t-t_{sw})}X(t_{sw}) = e^{A(t-t_{sw})}e^{(A-BK)t_{sw}}X_0
\end{equation}
We have $e^{A(t-t_{sw})}=e^{At}e^{-At_{sw}}$, and because $A(A-BK) = A^2 - ABK = A^2 - BKA = (A-BK)A$ due to condition (2), we have from Eq.(\ref{eqn:multidim_Xt}),
\begin{equation}
\label{eqn:multidim_Xt1}
    X(t) = e^{At}e^{-At_{sw}}e^{(A-BK)t_{sw}}X_0 = e^{At}e^{-BKt_{sw}}X_0
\end{equation}
Equation (\ref{eq_Itss2}) now becomes,
\begin{equation}
    \dot \alpha = -2X^T(t)(QBK)X(t)
\label{eqn:multidim_Xt1a}
\end{equation}
Integrating $\dot{\alpha}$ from $t_{sw}$ to $\infty$, we have,
\begin{equation}
\label{eqn:integral_of_alphadot}
    \int_{t_{sw}}^\infty{-2X^T(t)(QBK)X(t)}dt=\alpha(\infty)-\alpha(t_{sw})
\end{equation}
From Eqs.(\ref{eq_sw2}) and (\ref{eq_Itss1}), we have $\alpha(t_{sw})=s$. Furthermore, from Eq.(\ref{eq_bc2}) and because $A$ is Hurwitz, we have,
\begin{equation*}
    \left\{ \begin{aligned}
        &\lim_{t\to\infty}\lambda(t)=\boldsymbol{0}_{n\times 1} \\
        &\lim_{t\to\infty}X(t)=\boldsymbol{0}_{n\times 1} 
    \end{aligned} \right. \Rightarrow \lim_{t\to\infty}\alpha(t)=0
\end{equation*}
Thus, substituting (\ref{eqn:multidim_Xt1}) into (\ref{eqn:integral_of_alphadot}), we have,
\begin{equation}
\label{eqn:integral_of_alphadot_2}
    -2X_0^Te^{-K^TB^Tt_{sw}} \left( \int_{t_{sw}}^{\infty}{e^{A^Tt}(QBK)e^{At}}dt \right) 
    e^{-BKt_{sw}}X_0=-s
\end{equation}
From the Lyapunov equation, because $A$ is Hurwitz and $QBK \ge 0$, there exists a matrix $P \ge 0$ such that $A^TP+PA=-QBK$ \cite{Antsakalis97}. Note that,
\begin{equation}
\begin{aligned}
    \frac{d}{dt}\left(e^{A^Tt}Pe^{At}\right) &= \frac{d}{dt}\left( e^{A^Tt} \right)Pe^{At}
    + e^{A^Tt}P\frac{d}{dt}\left( e^{At}\right)\\
    &= e^{A^Tt} A^T P e^{At} + e^{A^Tt} P A e^{At} \\
    &= - e^{A^Tt} (QBK) e^{At}
\end{aligned}
\end{equation}
Therefore, Eq.(\ref{eqn:integral_of_alphadot_2}) reduces to,
\begin{equation}
\label{eqn:integral_of_alphadot_reduced}
\begin{aligned}
    &\quad-2X_0^Te^{-K^TB^Tt_{sw}}\int_{t_{sw}}^{\infty}d\left(-e^{A^Tt}Pe^{At} \right)
    e^{-BKt_{sw}}X_0=-s\\
    &\Rightarrow -2X_0^Te^{-K^TB^Tt_{sw}} \left[ \boldsymbol{0}_{n\times n} + e^{A^Tt_{sw}}Pe^{At_{sw}} \right]e^{-BKt_{sw}}X_0=-s\\
    &\Rightarrow -2X_0^Te^{-K^TB^Tt_{sw}} e^{A^Tt_{sw}}Pe^{At_{sw}} e^{-BKt_{sw}}X_0=-s\\
    &\Rightarrow 2X_0^Te^{(A-BK)^Tt_{sw}}Pe^{(A-BK)t_{sw}}X_0=s
\end{aligned}
\end{equation}
For a multi-dimensional system, Eq.(\ref{eqn:integral_of_alphadot_reduced}) is a transcendental equation which does not have a closed-form solution. However, since this equation has at most one solution, numerical methods can be used to find its root. We make the following remark:
\begin{remark}
Equation (\ref{eqn:integral_of_alphadot_reduced}) is equivalently, $2X^T(t_{sw})PX(t_{sw})=s$.
\label{rem1}
\end{remark}
The above observation can be made from Eq.(\ref{eqn:multidim_Xt1}). Next, we propose a condition on $s$ that guarantees the existence of a switching time $t_{sw}$. However, we first state and prove the following Lemma:
\begin{lemma}
\label{lemma:strictly_monotonic_of_injective_function}
Let $I$ be a real interval and let $f:I \mapsto \mathbb{R}$ be a continuous injective function. Then $f$ is strictly monotonic.
\end{lemma}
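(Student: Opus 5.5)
I will argue by contradiction, using the Intermediate Value Theorem as the main tool. Suppose $f$ is continuous and injective on the interval $I$ but not strictly monotonic. The first step is to reduce this to a statement about three points. Being not strictly increasing and not strictly decreasing yields points $a<b$ with $f(a)\ge f(b)$ and points $c<d$ with $f(c)\le f(d)$. Injectivity upgrades these to strict inequalities, so $f(a)>f(b)$ and $f(c)<f(d)$.

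The goal of the reduction is to extract three points $x<y<w$ in $I$ at which $f$ is not monotone, i.e., either $f(y)>\max(f(x),f(w))$ or $f(y)<\min(f(x),f(w))$. I expect this extraction to be the only fiddly step, and I would handle it in one of two ways.

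\emph{Case analysis.} Order the (at most four) points $a,b,c,d$ and examine consecutive triples. If every consecutive triple were monotone, then, because injectivity excludes ties, the value sequence along the ordered points would be monotone throughout. That contradicts having one decreasing pair and one increasing pair.

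\emph{Parametrized path, as an alternative.} To avoid the case split, set $x(t)=(1-t)a+tc$ and $y(t)=(1-t)b+td$ for $t\in[0,1]$. Then $x(t)<y(t)$ for all $t$, and both lie in $I$ by convexity of $I$. The function $g(t)=f(y(t))-f(x(t))$ is continuous, with $g(0)<0<g(1)$. By the Intermediate Value Theorem there is some $t^*$ with $g(t^*)=0$, which means $f(x(t^*))=f(y(t^*))$ while $x(t^*)\neq y(t^*)$. This contradicts injectivity directly and bypasses the three-point step altogether.

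The path argument is the one I would write up, since it is shortest. I would still verify the edge details: that $x(t)<y(t)$ holds because $a<b$ and $c<d$, and that membership in $I$ follows from $I$ being an interval.

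If I instead use the three-point route, say with $f(y)>\max(f(x),f(w))$, I pick a value $v$ strictly between $\max(f(x),f(w))$ and $f(y)$. The Intermediate Value Theorem then gives points $p\in(x,y)$ and $q\in(y,w)$ with $f(p)=f(q)=v$ and $p\neq q$, contradicting injectivity. The case $f(y)<\min(f(x),f(w))$ is symmetric. Either way, $f$ must be strictly monotonic, which is the property needed afterwards to conclude uniqueness or existence of $t_{sw}$ from Remark~\ref{rem1}.
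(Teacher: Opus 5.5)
Your argument is correct, but the version you choose to write up takes a different route from the paper's proof.

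The paper's proof asserts directly that if $f$ is not strictly monotonic, there is a triple $a<b<c$ with $f(b)$ weakly above (or weakly below) both $f(a)$ and $f(c)$. It makes the inequalities strict by injectivity. It then applies the Intermediate Value Theorem on $[a,b]$ and on $[b,c]$ at the level $d=\tfrac12\bigl(f(b)+\max\{f(a),f(c)\}\bigr)$, which produces two distinct preimages of $d$. This is essentially your fallback three-point route.

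Your path argument works differently. You take a decreasing pair $a<b$ and an increasing pair $c<d$, and deform one into the other linearly. Convexity of $I$ keeps both points ordered and inside $I$. You then apply the Intermediate Value Theorem once, to $g(t)=f(y(t))-f(x(t))$.

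What this buys you is that the passage from ``not strictly monotonic'' to a non-monotone triple never arises. The paper states that step without justification; it is true, but it is precisely the step you flagged as fiddly. So your write-up is fully self-contained. The paper's route is more geometric: a strict interior extremum forces some horizontal line to be crossed twice. However, it is only complete once the triple extraction is supplied, for instance by your consecutive-triples case analysis.
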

\textbf{Proof:} Suppose that $f$ is not strictly monotonic. Then there exist $a, b, c \in I$ with $a < b < c$ such that either:
\begin{equation*}
f(a) \leq f(b) \ \text{and} \ f(b) \geq f(c)
\end{equation*}
or
\begin{equation*}
f(a) \geq f(b) \ \text{and} \ f(b) \leq f(c).
\end{equation*}
Without loss of generality, assume that $f(a) \leq f(b)$ and $f(b) \geq f(c)$. Since $f$ is injective, the case $f(a)=f(b)$ implies $a=b$, which contradicts $a<b<c$. Hence, we must have $f(a) < f(b)$ and $f(b) > f(c)$. Let,
\begin{equation*}
d = \frac{f(b) + \max\{f(a), f(c)\}}{2}    
\end{equation*}
By the Intermediate Value Theorem, there exist $x_1 \in (a,b)$ and $x_2 \in (b,c)$ such that $f(x_1)=d$ and $f(x_2)=d$. Since $x_1 < b < x_2$ and $f(x_1)=f(x_2)$, this contradicts the injectivity of $f$. Therefore, $f$ must be strictly monotonic. \hfill $\square$

Now, we state and prove the following Proposition:
\begin{proposition}
\label{proposition:condition_of_s}
Equation (\ref{eqn:integral_of_alphadot_reduced}) admits a unique positive root if and only if $0 < s < 2 X_0^T P X_0$.
\end{proposition}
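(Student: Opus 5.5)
We study the scalar function
\begin{equation*}
g(t) = 2X_0^T e^{(A-BK)^T t} P e^{(A-BK)t} X_0, \qquad t \in [0,\infty),
\end{equation*}
so that Eq.(\ref{eqn:integral_of_alphadot_reduced}) reads $g(t_{sw}) = s$. By Remark \ref{rem1}, $g(t) = 2X^T(t)PX(t)$ along the sensing-on arc $\dot X = (A-BK)X$. Three properties of $g$ are needed:
\begin{enumerate}
\item $g$ is continuous, since it is built from matrix exponentials.
\item $g(0) = 2X_0^T P X_0$.
\item $\lim_{t\to\infty} g(t) = 0$, because $A-BK$ is Hurwitz (condition 1), so $e^{(A-BK)t}X_0 \to 0$.
\end{enumerate}
The claim then reduces to showing that $g$ maps $(0,\infty)$ bijectively onto $(0, g(0))$. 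Once that holds, the equation $g(t)=s$ has a unique positive root exactly when $0<s<2X_0^TPX_0$.

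\textbf{Sufficiency.} For $0<s<g(0)$, the limit $g(t)\to 0<s$ gives some $T$ with $g(T)<s$. The Intermediate Value Theorem on $[0,T]$ then produces a root in $(0,T)$. For uniqueness, the plan is to show that $g$ is strictly decreasing. Differentiating gives
\begin{equation*}
\dot g = 2X^T\left[(A-BK)^TP + P(A-BK)\right]X = 2X^T\left[-QBK - (BK)^TP - PBK\right]X.
\end{equation*}
Here I would use the commutativity of $A$ and $BK$ to show that the bracket is negative semidefinite, and strictly negative along trajectories with $X\neq 0$. A cleaner route may be to write $P=\int_0^\infty e^{A^T\tau}(QBK)e^{A\tau}d\tau$ and note that
\begin{equation*}
g(t) = 2\int_t^\infty \tilde X^T(\tau)(QBK)\tilde X(\tau)\,d\tau,
\end{equation*}
where $\tilde X(\tau) = e^{A(\tau - t)}X(t)$. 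Under commutativity, $\tilde X(\tau) = e^{A\tau}e^{-BKt}X_0$, and $t$ enters only through $e^{-BKt}$ and the lower limit. The lower-limit contribution gives monotone decrease directly since $QBK\ge 0$; the $e^{-BKt}$ factor is what still has to be controlled. As an alternative for uniqueness, I would invoke the earlier observation that, under conditions (1)–(3), the optimal solution has at most one switching point, so the equation has at most one solution. Combined with continuity and Lemma \ref{lemma:strictly_monotonic_of_injective_function}, this makes $g$ strictly monotonic, and the limits $g(0)>g(\infty)=0$ force it to be decreasing.

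\textbf{Necessity.} Suppose a positive root $t_{sw}$ exists. Strict monotonic decrease gives $s=g(t_{sw})<g(0)=2X_0^TPX_0$. Positivity $s>0$ is part of the problem setup, and it is also consistent with $g>0$: if $g$ is strictly decreasing to $0$, then $g(t)>0$ for every finite $t$. So $0<s<2X_0^TPX_0$.

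\textbf{Main obstacle.} The hard step is rigorously establishing strict monotonicity (injectivity) of $g$, in particular handling the $e^{-BKt}$ factor or justifying that the ``at most one switching'' property applies to this algebraic equation. I would first try the Lemma \ref{lemma:strictly_monotonic_of_injective_function} route, taking injectivity from the sign structure of $\dot\alpha=-2X^T(QBK)X\le 0$. Only if that argument is insufficient would I fall back to the direct derivative computation using $A(BK)=(BK)A$.
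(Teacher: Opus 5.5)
Your setup ($g$ continuous, $g(0)=2X_0^TPX_0$, $g(t)\to 0$, the intermediate value theorem, bijection onto $(0,g(0))$) is the paper's proof. So is your primary uniqueness route: ``at most one switching point'' $\Rightarrow$ $g$ injective $\Rightarrow$ strictly monotonic by Lemma~\ref{lemma:strictly_monotonic_of_injective_function}. But the step you flagged as the main obstacle is a genuine gap, and the paper's proof has the same gap. The sign of $\dot\alpha=-2X^T(QBK)X$ only shows that along \emph{one} extremal the switching function crosses $s$ at most once. Two roots $t_1<t_2$ of $g(t)=s$ belong to two \emph{different} candidate extremals, one switching at $t_1$ and one at $t_2$. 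Each crosses $s$ once, so nothing is contradicted.

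The inference can actually fail. Take $A=-I$, $B=I$, $Q=I$, $K=\begin{bmatrix}1&-1\\1&0\end{bmatrix}$, $X_0=[1\;\;0]^T$. Then $A$ and $BK$ commute, $A$ and $A-BK$ are Hurwitz, and $x^TQBKx=x_1^2\ge 0$. Also $P=K/2$ and $g(t)=x_1(t)^2$, so $2X_0^TPX_0=1$. Since $A-BK$ has eigenvalues $(-3\pm i\sqrt3)/2$, $x_1(t)$ has infinitely many zeros. Hence $g(t)=s$ has several positive roots for small $s\in(0,1)$, even though $\alpha$ is monotone along every extremal.

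What closes the gap is your fallback derivative route, plus one ingredient you did not identify: the symmetry $QBK=(BK)^TQ$. For a single input this is automatic, because $x^TQb\,k^Tx\ge 0$ for all $x$ forces $k$ to be a nonnegative multiple of $Qb$ (or $Qb=0$). For multiple inputs it is an extra hypothesis, as the example shows. With symmetry, $P$ is symmetric positive semidefinite, and commutativity lets $BK$ pass through $e^{A\tau}$:
\[
(BK)^TP+PBK=\int_0^\infty e^{A^T\tau}\left[(BK)^TQBK+QBK\,BK\right]e^{A\tau}\,d\tau=2\int_0^\infty e^{A^T\tau}(BK)^TQBK\,e^{A\tau}\,d\tau\succeq 0 .
\]
So your bracket is negative semidefinite and $\dot g\le 0$. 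In your integral representation this says exactly that the $e^{-BKt}$ factor contributes $-4\int_t^\infty \tilde X^T(BK)^TQBK\tilde X\,d\tau\le 0$.

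For strictness, your claim ``strictly negative whenever $X\neq 0$'' is false in general (take $X_0\neq 0$ with $X_0^TPX_0=0$). Use analyticity instead: $g$ is real analytic, so $\dot g\le 0$ either vanishes identically or only at isolated points. The first case gives $g\equiv g(0)$, which is impossible when $g(0)>0$ since $g\to 0$. The second case makes $g$ strictly decreasing. Your sufficiency and necessity arguments then go through unchanged. If $X_0^TPX_0=0$, then $g\equiv 0$ and there is no root, consistent with the statement.
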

\textbf{Proof:} Let $f(t) = 2X_0^T e^{(A-BK)^T t} P e^{(A-BK)t} X_0$. The function $f(t)$ is continuous in $t$. Furthermore, under the assumption that $QBK$ is positive semidefinite, the system admits at most one switching time. Hence, $f(t)$ is injective on $(0,+\infty)$. By Lemma \ref{lemma:strictly_monotonic_of_injective_function}, $f(t)$ is strictly monotonic on this interval. Since $P$ is positive semidefinite and $(A-BK)$ is Hurwitz, we have:
\begin{equation*}
        \lim_{t \to 0^+} f(t) = 2X_0^T P X_0 \geq 0, \quad
        \lim_{t \to \infty} f(t) = 0.
\end{equation*}
Therefore, if $2X_0^T P X_0 > 0$, the function $f(t)$ is {\it bijective} from $(0,+\infty)$ to $\left(0, 2X_0^T P X_0\right)$. Hence, the equation $f(t)=s$ admits a unique solution if and only if $0 < s < 2 X_0^T P X_0$. \hfill $\square$

We observe that, if $A$ and $(A-BK)$ are Hurwitz, $A$ and $BK$ commute and if $QBK$ is negative semidefinite, then the only optimal control is $z^*(t)=0 \, \forall \, t$. This is because, for a negative semidefinite $QBK$, following the derivation of Eq.(\ref{eqn:integral_of_alphadot_reduced}) and from Remark \ref{rem1}, we have, $2X^T(t_{sw})PX(t_{sw})=-s$, which does not have a solution for $s>0$. Thus, $z^*$ does not undergo any switching. In this case, $z^*(t)=0 \, \forall \, t$ is the only possibility for obtaining a finite cost. We end this section by making the following remark,
\begin{remark}
If $X_0^T P X_0 = 0$, then the optimal solution satisfies $z^*(t)=0 \ \forall t$.
\end{remark}

%%%%%%%%%%%%%%%%%%%%%%%%%%%%%%%%%%%%%%%%%%%%%%%%%%%%%%%%%%%%%%%%%%%%%%%%%%%%%%%%%%%%%%%
\subsection{Closed-Form Solution for First-order System}
\label{sec_1st}
Extending the discussion so far in Section \ref{sec_1sw}, we next explore some interesting characteristics of the optimal problem of Section \ref{subsection:method2} when posed for a first-order system. Consider the first order system,
\begin{equation}
\label{eqn:sys1order}
    \dot{x} = ax+bu
\end{equation}
We assume that the parameters $a$ and $b$, and the initial condition $x(0)=x_0>0$ are known. The control signal is, $u=-kzx$, where $k$ is a constant positive feedback gain and $z$ is the sensing state. From the characteristic of first-order systems, we observe that $x(t) \ne 0 \ \forall t$. The dynamical equation becomes,
\begin{equation}
\label{eqn:sys1order_x}
    \dot{x} = ax - bkzx = (a-bkz)x
\end{equation}
The above structure conforms to the system of Section \ref{subsection:method2}. The objective is to minimize the cost functional,
\begin{equation}
\label{eqn:cost_1storder}
    J=\int_0^{t_f}[(q_1+q_2z)x^2+sz]dt
    %J=\int_0^{t_f}[(q_1+q_2z^2)x^2+sz^2]dt
\end{equation}
where $q_1 \geq 0$, $q_2 \geq 0$, and $s > 0$ are constant. In the equation above, $q_1$ is the coefficient associated with the state cost, while $q_2$ becomes active only when the state is sensed, i.e., when $z=1$. The term $q_2zx^2$ represents an input cost of the form given in Eq.(\ref{eq_obs7}). The Hamiltonian $H$ is ($\lambda$ is the Lagrange multiplier),
%and distinguishes this problem from that of Section \ref{subsection:method2}. 
\begin{equation}
    \begin{aligned}
    H &= (q_1 + q_2z)x^2 + sz+\lambda (ax-bkzx) \\
    %H &= (q_1 + q_2z^2)x^2 + sz^2+\lambda (ax-bkzx)
    &= q_1x^2 + a\lambda x + z(q_2x^2 + s - bk\lambda x)
    \end{aligned}
\end{equation}
 The Pontryagin's minimum principle is applied to minimize $H$ with respect to $z$, by expanding the admissible range of $z$ to the interval $[0,1]$, as done in Section \ref{sect:theory}. This yields,
\begin{equation}
\label{eqn:firstorderMinimumPrinciple}
    \begin{aligned}
        %&\quad\left.H\right|_{z=0}<\left.H\right|_{z=1}\\
        %&\Leftrightarrow q_1x^2+\lambda ax < q_1x^2 + q_2x^2 + s + \lambda x(a-bk)\\
        %&\Leftrightarrow -q_2x^2+bk\lambda x < s
        &\quad H(x^*,\lambda^*,z^*) \leq H(x^*,\lambda^*,z) \\
        &\Leftrightarrow z^*(q_2x^2 + s - bk\lambda x) \leq z(q_2x^2 + s - bk\lambda x) \\
        &\Leftrightarrow z^* = \left\{ \begin{array}{lll}
        0 & \text{if} & \; (q_2x^2 + s - bk\lambda x) > 0 \\
        1 & \text{if} & \; (q_2x^2 + s - bk\lambda x) < 0 \\
        \text{indeterminate} & \text{if} & \; (q_2x^2 + s - bk\lambda x) = 0 \\
        \end{array} \right.\\
    \end{aligned}        
\end{equation}
Denoting $\alpha(t) = q_2x^2(t) + s - bk\lambda(t)x(t)$ to be the \textit{switching function}, we observe that $z^*(t) = 0$ when $\alpha(t) > 0$ and $z^*(t) = 1$ when $\alpha(t) < 0$. The time derivative $\dot{\alpha}(t)$ is,
\begin{equation}
\label{eqn:switching_1order}
    \dot{\alpha}(t) = (2q_2x - bk\lambda)\dot{x} - bkx\dot{\lambda}
\end{equation}
The co-state equation is,
\begin{equation}
\label{eqn:costate_1order}
    \begin{aligned}
    \dot\lambda &= -\frac{\partial H}{\partial x}
    = -2(q_1 + q_2z)x - \lambda (a-bkz)\\
    &= -2(q_1 + q_2z)x - \lambda (a-bkz)
    \end{aligned}    
\end{equation}
Substituting Eqs.(\ref{eqn:sys1order_x}) and (\ref{eqn:costate_1order}) into Eq.(\ref{eqn:switching_1order}), we have,
\begin{equation}
    \label{eqn:alpha_dot}
    \begin{aligned}
    \dot\alpha &= (2q_2x-bk\lambda)(a-bkz)x \\
    &\quad-bkx[-2(q_1 + q_2z)x-\lambda(a-bkz)] \\
    &= 2q_2(a-bkz)x^2-bk\lambda x(a-bkz) +2(q_1 + q_2z)bkx^2 \\
    &\quad + bkx\lambda(a-bkz)\\
    &= 2x^2[q_2(a-bkz)+(q_1+q_2z)bk] = 2(q_1bk+q_2a)x^2
    \end{aligned}
\end{equation}
Note that since the system considered here is scalar, the commutativity requirement of Eq.(\ref{eq_Itss2}) is not applicable in this case. In Eq.(\ref{eqn:alpha_dot}), since $a,b,q_1,q_2,$ and $k$ are constants, it follows that the sign of $\dot\alpha(t)$ remains unchanged over time. Therefore, we can conclude that the optimal solution will have at most one switching point. Depending on the value of ($a,b,k,q_1,q_2$), there are three possible cases listed in Table \ref{table:firstorder}. Note that the switching time $t_{sw}$ depends on $a$, $b$, $k$, $q_1$, $q_2$, and $t_f$.

%Figure \ref{fig:firstorder} visualizes the optimal sequences for each case.
\begin{table}[ht]
\begin{center}
\caption{Optimal sensing sequences - first-order system}
\label{table:firstorder}
\begin{tabular}{|c|c|l|}
\hline
Case & $q_1bk+q_2a$ & Possible optimal sequences \\\hline
1 & $+$ & $\begin{array}{ll} z=0 \ \forall t \\
\text{or} \ z=1 \ \forall t\\
\text{or } z(t) = \left\{ \begin{array}{lll}
         1 & \text{if} & 0 \leq t < t_{sw}  \\
         0 & \text{if} & t_{sw} \leq t
    \end{array} \right.
    \end{array}$\\ \hline
2 & $-$ & $\begin{array}{ll} z=0 \ \forall  t\\
\text{or} \ z=1 \ \forall t\\
\text{or } z(t) = \left\{ \begin{array}{lll}
         0 & \text{if} & 0 \leq t < t_{sw}  \\
         1 & \text{if} & t_{sw} \leq t
    \end{array} \right.
    \end{array}$\\ \hline
3 & $0$ & $\begin{array}{ll} z=0 \ \forall t\\
\text{or} \ z=1 \ \forall t\\
\text{or Singular Control} \end{array}$\\ \hline
\end{tabular}
\end{center}
%\vspace{-7pt}
\end{table}
\begin{comment}
    
\begin{figure}[ht]
\begin{center}
\includegraphics[width = 0.65\columnwidth]{Figures/fig_1storder_cases_v4.eps}
%\vspace{-.3in}
\caption{Switching Condition for the First-Order System \textcolor{red}{(It will be corrected later to match Table 1)}} 
\label{fig:firstorder}
\end{center}
\end{figure}

\end{comment}
We consider a special case when $t_f \to \infty$. In this case, we can prove that a closed-form analytical solution exists. We propose the following:

%%%%%%%%%%%%%%%%%%%%%%%%%%%%%%%%%%%%%%%%%%%%%%%%%%%%%%%%%%%%%%%%%%%%%%%%%%%
% NOTE: Place the figure* here to keep the figures close to the related text.
% Figure: example for 5.1
\begin{figure*}[t]
\begin{center}
\includegraphics[width = 6.0in]{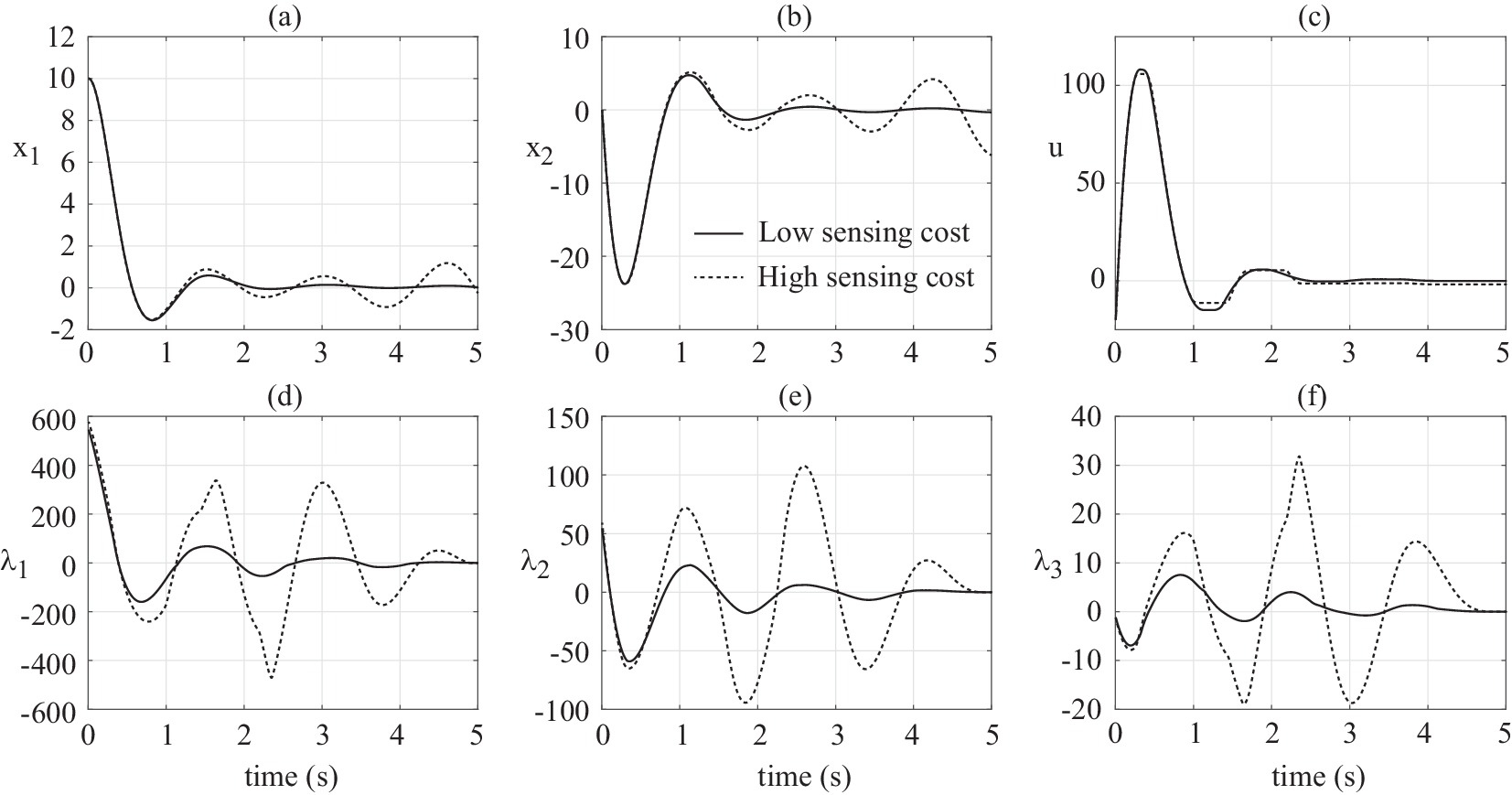}
%\vspace{-.3in}
\caption{States and Co-states for Simulations in Section \ref{section:ex3p1} for Case 1: Low, and Case 2: High Sensing-Cost Weights}
\label{fig:RelaxBvp_3p1_2ndorder}
\end{center}
\end{figure*}
%%%%%%%%%%%%%%%%%%%%%%%%%%%%%%%%%%%%%%%%%%%%%%%%%%%%%%%%%%%%%%%%%%%%%%%%%%%

\begin{proposition}
For the first order system described in Eq.(\ref{eqn:sys1order}), with a closed-loop behavior of Eq.(\ref{eqn:sys1order_x}), a cost functional of Eq.(\ref{eqn:cost_1storder}), and an optimal switching sequence of Eq.(\ref{eqn:firstorderMinimumPrinciple}), the following is true under the assumption that $a < 0$ and $(a - bk) <0$, i.e., the open-loop and closed-loop systems are stable:
\begin{enumerate}
    \item If $(q_1bk+q_2a) > 0$, the optimal control is a bang-bang control that with at most one switching point. Furthermore, if $0 < {-as}/[{(q_1bk+q_2a)x_0^2}] < 1$, the optimal switching time is,
    \begin{equation}
        t_{sw} = \frac{1}{2(a-bk)}ln \left( \frac{-as}{(q_1bk+q_2a)x_0^2} \right)
    \end{equation}
    If the inequality is not satisfied, the optimal control is $z^*(t) = 0 \ \forall t$,
    \item If $q_1bk+q_2a \leq 0$, the optimal control is $z^*(t) = 0 \ \forall t$.
\end{enumerate}
\end{proposition}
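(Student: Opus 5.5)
The plan is to work directly with the switching function $\alpha(t)=q_2x^2+s-bk\lambda x$ and its derivative from Eq.(\ref{eqn:alpha_dot}), $\dot\alpha=2(q_1bk+q_2a)x^2$. We use the infinite-horizon transversality condition $\lambda(\infty)=0$ together with $x(\infty)=0$. The latter holds because both $a<0$ and $a-bk<0$, so $x$ decays under either mode. Together these give $\alpha(\infty)=s>0$.

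First, the at-most-one-switch claim. Since $x(t)\neq 0$, $\dot\alpha$ has the constant strict sign of $c:=q_1bk+q_2a$ whenever $c\neq 0$. Hence $\alpha$ is strictly monotone and crosses zero at most once. This gives bang-bang control with at most one switch and excludes singular arcs.

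\textbf{Case $c>0$.} Here $\alpha$ is increasing toward $s>0$, so the only candidate sequences are $z\equiv 0$ or $z=1$ on $[0,t_{sw})$ followed by $z=0$. The sequence $z\equiv1$ for all $t$ is excluded because $\alpha(\infty)=s>0$ forces $z=0$ eventually. For the switched sequence, $x(t)=x_0e^{(a-bk)t}$ on $[0,t_{sw}]$ and $x(t)=x(t_{sw})e^{a(t-t_{sw})}$ afterwards. We then integrate $\dot\alpha$ from $t_{sw}$ to $\infty$, as in Eq.(\ref{eqn:integral_of_alphadot}), using $\alpha(t_{sw})=0$:
\begin{equation*}
s=\int_{t_{sw}}^{\infty}2c\,x^2(t_{sw})e^{2a(t-t_{sw})}dt=-\frac{c}{a}x_0^2e^{2(a-bk)t_{sw}} .
\end{equation*}
This is the scalar analogue of Remark \ref{rem1}, with $P=-c/(2a)$. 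Solving for the switch time gives
\begin{equation*}
t_{sw}=\frac{1}{2(a-bk)}\ln\!\left(\frac{-as}{c x_0^2}\right).
\end{equation*}
This $t_{sw}$ is positive exactly when the logarithm's argument lies in $(0,1)$, since $a-bk<0$; positivity of the argument is automatic because $a<0$ and $c>0$. This matches Proposition \ref{proposition:condition_of_s}.

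If the inequality fails, no positive root exists. Then no switch occurs on $(0,\infty)$, and with $\alpha(\infty)>0$ and monotonicity we get $\alpha>0$ throughout, so $z^*\equiv0$.

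\textbf{Case $c\le 0$.} For $c<0$, $\alpha$ is decreasing to $s>0$, so $\alpha(t)>s>0$ for all finite $t$, giving $z^*\equiv 0$. The same conclusion also follows from the contradiction $2x^2(t_{sw})P=-s$ with $P\ge0$, as in the negative-semidefinite observation after Proposition \ref{proposition:condition_of_s}. For $c=0$, $\alpha\equiv s>0$, so again $z^*\equiv0$ and no singular arc is possible.

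The main obstacle is justifying the boundary conditions at infinity, $\lambda(\infty)=0$ and hence $\alpha(\infty)=s$, for the infinite-horizon problem. I would treat this as the limit $t_f\to\infty$ of $\lambda(t_f)=0$, exactly as the paper does before Eq.(\ref{eqn:integral_of_alphadot_2}). Alternatively, one can compute $\lambda$ explicitly on the final $z=0$ arc: solving $\dot\lambda=-2q_1x-a\lambda$ backward with $\lambda\to0$ gives $\lambda=-(q_1/a)x$. Substituting this into $\alpha$ lets one verify $\alpha(t_{sw})=0$ directly as a consistency check.
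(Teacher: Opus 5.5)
Your proposal is correct and follows essentially the same route as the paper. Both arguments use the switching function $\alpha$ with $\dot\alpha=2(q_1bk+q_2a)x^2$, strict monotonicity to get at most one switch, integration of $\dot\alpha$ over $[t_{sw},\infty)$ with $\alpha(t_{sw})=0$ and $\alpha(\infty)=s$ to obtain $t_{sw}$, and $\alpha\equiv s$ to exclude singular arcs when $q_1bk+q_2a=0$. The one real deviation is the case $q_1bk+q_2a<0$: you conclude $\alpha(t)>s>0$ from monotonicity, whereas the paper rejects $z\equiv 1$ and the $0\to 1$ sequence because they incur an unbounded sensing cost $\int s\,dt$, which avoids relying on the transversality condition $\lambda(\infty)=0$ that you yourself flag as the delicate step.
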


\textbf{Proof for (1):} Note from Eqs.(\ref{eqn:sys1order}) and (\ref{eqn:sys1order_x}) that the first order system will not produce $x = 0$ over a finite interval $t \in [t_1, t_2]$. Thus, a necessary condition for the existence of a singular interval, from Eq.(\ref{eqn:alpha_dot}), is $(q_1bk+q_2a) = 0$. Note that a singular interval corresponds to the ``indeterminate" condition of Eq.(\ref{eqn:firstorderMinimumPrinciple}) when valid over an interval $t \in [t_1, t_2]$. Since, in this case, $q_1bk~+~q_2a~>~0$, therefore, a singular interval will not exist. Hence, from Eq.(\ref{eqn:firstorderMinimumPrinciple}), the optimal control in this case is bang-bang control. From Eq.(\ref{eqn:alpha_dot}), $\alpha(t)$ is strictly increasing. Therefore, the optimal control has at most one switching point, as in Case 1 of Table \ref{table:firstorder}. Assume that the switching time exists (denoted by $t_{sw}$), the optimal control can be expressed as,
\begin{equation}
    z^*(t) = \left\{ \begin{array}{lll}
         1 & \text{if} & 0 \leq t < t_{sw}  \\
         0 & \text{if} & t_{sw} \leq t
    \end{array} \right.
\end{equation}
We integrate Eq.(\ref{eqn:sys1order}) to find the expression of $x^*(t)$:
\begin{equation}
    x^*(t) = \left\{
    \begin{array}{ll}
        x_0e^{(a-bk)t} \quad \quad & \text{if} \quad 0\leq t\leq t_{sw}\\
        (x_0e^{-bkt_{sw}})e^{at} \quad & \text{if} \quad t>t_{sw}
    \end{array}
    \right.
\end{equation}
The terminal condition of this problem is $\lambda(\infty)=({\partial \phi}/{\partial x}) = 0$. Since we assume the system to be both open-loop and closed-loop stable, $x(\infty) \to 0$. Combining with the switching condition $\alpha(t_{sw})=0$, we have $\alpha(\infty) = s$. We can solve for the switching time $t_{sw}$:
\begin{equation}
\label{eqn:t_sw_1order}
\begin{aligned}
    &\quad \int_{t_{sw}}^\infty\dot\alpha(t)dt = \alpha(\infty)-\alpha(t_{sw}) = s\\
    &\Rightarrow 2(q_1bk+q_2a)x_0^2e^{-2bkt_{sw}}\int_{t_{sw}}^\infty e^{2at}dt=s\\
    &\Rightarrow 2(q_1bk+q_2a)x_0^2e^{-2bkt_{sw}}\left(-\frac{1}{2a}e^{2at_{sw}}\right)=s\\
    &\Rightarrow e^{2(a-bk)t_{sw}}=\frac{-as}{(q_1bk+q_2a)x_0^2}\\
    &\Rightarrow t_{sw}=\frac{1}{2(a-bk)}ln\left(\frac{-as}{(q_1bk+q_2a)x_0^2}\right)
\end{aligned}
\end{equation}
This equation shows that there is an analytical solution for the first-order infinite-time optimal problem. To guarantee the existence of the one-switching-point solution, the criterion for $t_{sw}$ is $t_{sw}>0$. From Eq.(\ref{eqn:t_sw_1order}), the corresponding condition is $0~<~{-as}/[(q_1bk+q_2a)x_0^2]~<~1$. If the condition does not hold, the optimal control does not contain any switching point. Because the final time $t_f$ tends to infinity, and $\alpha(\infty) = s > 0$, therefore the optimal control in this case is $z^*(t) = 0 \ \forall t$. \hfill $\square$

\begin{comment}
% Previous proof for (2)
\textbf{Proof for (2):} From Table \ref{table:firstorder}, the optimal control in this case can be one of the three solutions: $z(t) = 0 \ \forall t$, $z(t) = 1 \ \forall t$, or $z(t) = 0$ for $0 \leq t < t_{sw}$ and $z(t) = 1$ for $t_{sw} \leq t$. Note that since $t_f \to \infty$, only the solution  $z(t) = 0 \ \forall t$ will yield bounded cost. Therefore, it is the optimal solution. $\square$

% Previous proof for (3)
\textbf{Proof for (3):} If we have singular control on the sub-interval $t \in \left[ t_1,t_2 \right]$, then over that interval,
\begin{equation}
    %\begin{array}
        \frac{\partial H}{\partial z} = 0 \Rightarrow q_2x^2 + s - bk\lambda x = 0
    %\end{array}    
\end{equation}
and the generalized condition for singular control (\cite{Stengel94}) requires:
\begin{equation}
\label{eqn:genConvexCond}
    %\begin{array}
        (-1)^{p/2}\frac{\partial}{\partial z}\left[\frac{d^p}{dt^p}\frac{\partial H}{\partial z}\right] \geq 0
    %\end{array}    
\end{equation}
where $p$ is the order of singularity. From Eq.(\ref{eqn:alpha_dot}), note that $q_1bk+q_2a = 0$, we have:
\begin{equation}
    \begin{array}{l}
        \frac{d}{dt}\frac{\partial H}{\partial z} = 2(q_1bk+q_2a)x^2 = 0 \\
        \Rightarrow \frac{\partial}{\partial z}\left[\frac{d^n}{dt^n}\frac{\partial H}{\partial z}\right] = 0 \quad \forall n > 0
    \end{array}    
\end{equation}
The condition in Eq.(\ref{eqn:genConvexCond}) is satisfied.
\end{comment}

\textbf{Proof for (2):} First, consider the case $q_1bk+q_2a < 0$. From Table \ref{table:firstorder}, the optimal control in this case can be one of the three solutions: $z(t) = 0 \ \forall t$, $z(t) = 1 \ \forall t$, or $z(t) = 0$ for $0 \leq t < t_{sw}$ and $z(t) = 1$ for $t_{sw} \leq t$. Note that since $t_f \to \infty$, only the solution  $z(t) = 0 \ \forall t$ will yield bounded cost. Therefore, it is the optimal solution. 

If $q_1bk+q_2a = 0$, we will prove the singular control does not exist in this case. If we have singular control on the sub-interval $t \in \left[ t_1,t_2 \right]$, then over that interval,
\begin{equation}
\label{eqn:singularcontrol}
    %\begin{array}
        \frac{\partial H}{\partial z} = 0 \Rightarrow q_2x^2 + s - bk\lambda x = 0
    %\end{array}    
\end{equation}
Note that the left-hand side of the equation above is the switching function $\alpha(t)$. From Eq.(\ref{eqn:alpha_dot}), we have $\dot\alpha(t)=2(q_1bk+q_2a)x^2=0 \ \forall t$. Therefore, if the singular control exists, equation (\ref{eqn:singularcontrol}) holds for all $t > 0$. Recall from  Proof of (1) that $\lim_{t\to\infty} x(t) = 0$. Also, $\lim_{t\to\infty} \lambda(t) = 0$ from the final condition on the co-states. Therefore,
\begin{equation}
\lim_{t\to\infty}\alpha(t) = \lim_{t\to\infty}(q_2x^2 + s - bk\lambda x) = s > 0
\end{equation}
This is a contradiction. Therefore, a singular control does not exist, $\alpha(t) = s \ \forall t$ and $z^*(t) = 0 \ \forall t$. \hfill $\square$

%%%%%%%%%%%%%%%%%%%%%%%%%%%%%%%%%%%%%%%%%%%%%%%%%%%%%%%%%%%%%%%%%%%%%%%%%%%
\section{Numerical Solution Examples}
\label{section:Examples}

%%%%%%%%%%%%%%%%%%%%%%%%%%%%%%%%%%%%%%%%%%%%%%%%%%%%%%%%%%%%%%%%%%%%%%%%%%%
\subsection{Section \ref{subsection:method1} Example}
\label{section:ex3p1}

In this example, we choose,
\begin{equation}
    A = \left[ 
    \begin{array}{cc}
    0 & 1 \\
    -16 & 1
    \end{array}
    \right], \; 
    B = \left[ 
    \begin{array}{c}
    0 \\
    1
    \end{array}
    \right], \;
    X(0) = \left[ 
    \begin{array}{c}
    10 \\
    0
    \end{array}
    \right], \;
    K  = \left[ 
    \begin{array}{cc}
    2 & \; 5
    \end{array}
    \right]
    \label{eq_sim1}
\end{equation}

%%%%%%%%%%%%%%%%%%%%%%%%%%%%%%%%%%%%%%%%%%%%%%%%%%%%%%%%%%%%%%%%%%%%%%%%%%%
% Note: place the figure here to keep it close to the related text
% Figure: comparison of control input and sensing state, example 5.1
\begin{figure}[b] % b: place it at the bottom of page
\begin{center}
\includegraphics[width = 0.95\columnwidth]{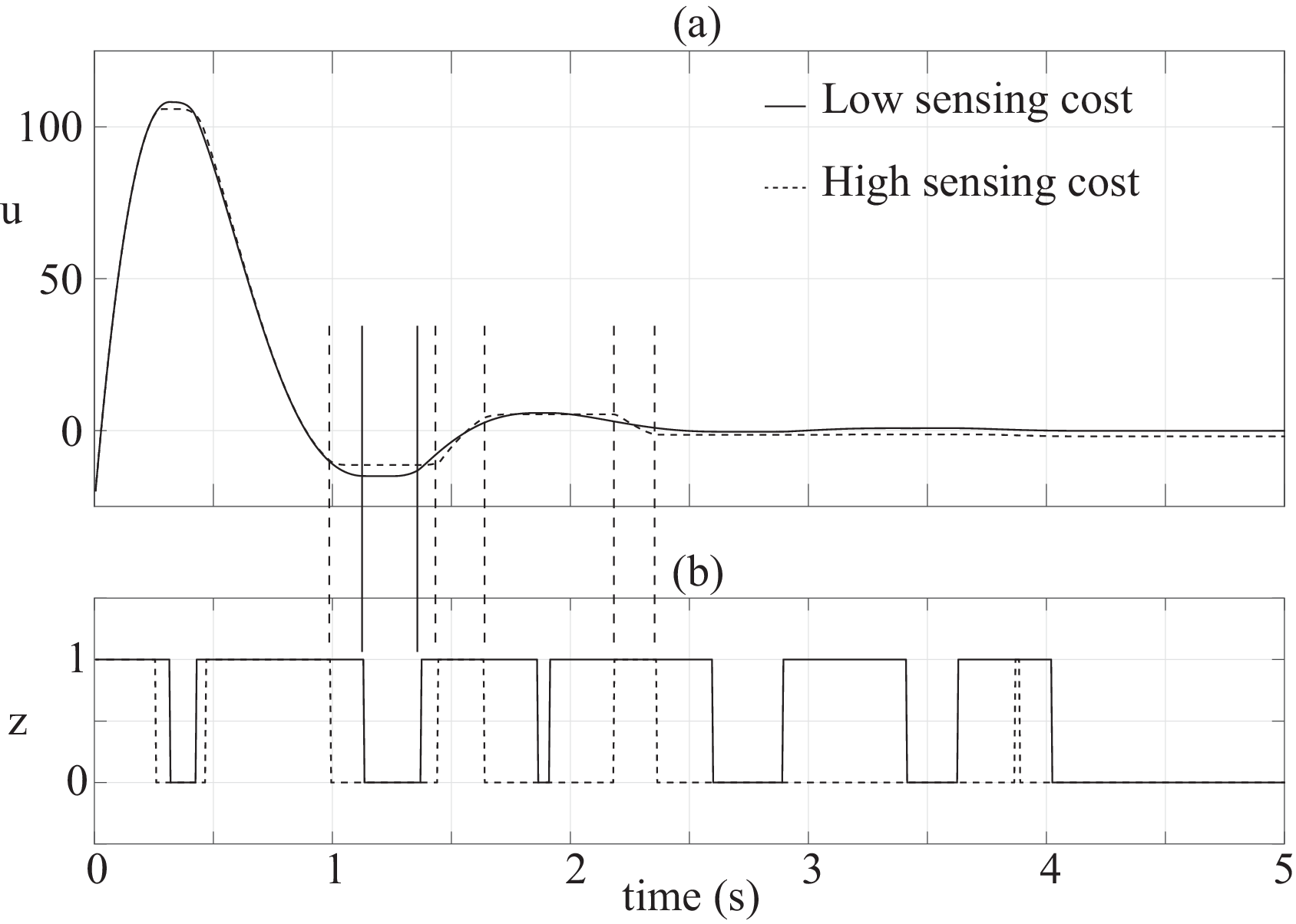}
%\vspace{-.05in}
\caption{Comparison of Control Input and Sensing State} 
\label{fig:RelaxBvp_3p1_2ndorder_sw}
\end{center}
\end{figure}

% Note: place the figure here to keep it close to the related text
%Figure: example 5.2
\begin{figure*}[t]
\begin{center}
\includegraphics[width = 2.0\columnwidth]{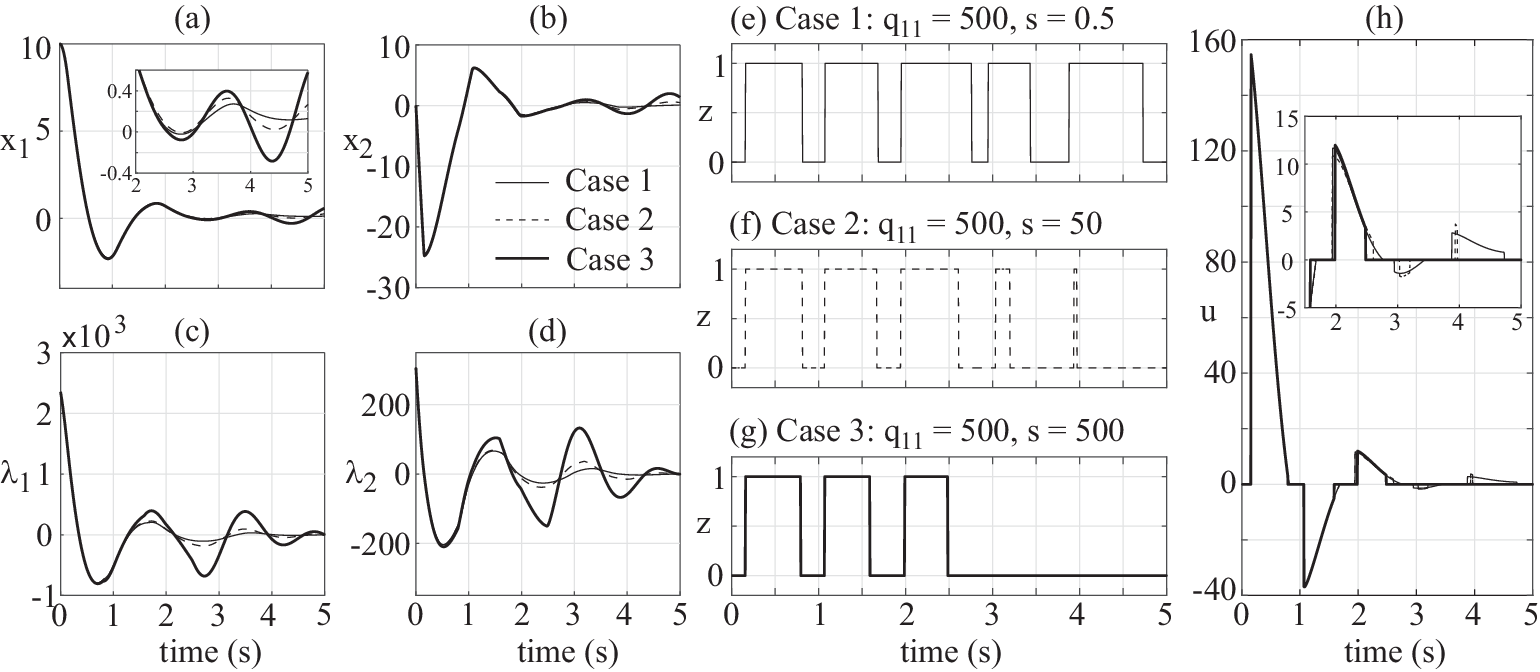}
%\vspace{-.3in}
\caption{Section \ref{section:ex3p2} Simulation: States, Co-states, Switching Variable $z$ and Control Input for Case 1: Low, Case 2: Intermediate, and Case 3: High Sensing-Cost} 
\label{fig:RelaxBvp_3p2_2ndorder}
\end{center}
\end{figure*}
%%%%%%%%%%%%%%%%%%%%%%%%%%%%%%%%%%%%%%%%%%%%%%%%%%%%%%%%%%%%%%%%%%%%%%%%%%%

Definitions of the matrices, and vectors in Eq.(\ref{eq_sim1}) can be found in Section \ref{subsection:method1}. We consider two cases to compare results. In both cases, we use the same $Q$ matrix. However, the sensing-cost weight $s$ is varied. The weights are:
\begin{equation}
    Q = \left[ 
    \begin{array}{cc}
    100 & 0 \\
    0 & 0
    \end{array}
    \right], \; 
    s = \left\{ 
    \begin{array}{rl}
    1 & \; \text{Case 1: Low Sensing-Cost} \\
    1000 & \; \text{Case 2: High Sensing-Cost}
    \end{array}
    \right.
    \label{eq_sim2}
\end{equation}

We recall that in this scenario, $u = -KX$ when $z=1$, and $\dot{u}=0$ when $z=0$. For both cases of Eq.(\ref{eq_sim2}), the TPBVPs were solved using the ``Relaxation Method" \cite{Press1992}. The simulation results are presented in Figs.\ref{fig:RelaxBvp_3p1_2ndorder} and \ref{fig:RelaxBvp_3p1_2ndorder_sw}. 

Figure \ref{fig:RelaxBvp_3p1_2ndorder} gives the state and co-state trajectories for the two cases. Figure \ref{fig:RelaxBvp_3p1_2ndorder_sw} compares the control input $u$ and the sensing/no-sensing switching state $z$ for these cases. From Fig.\ref{fig:RelaxBvp_3p1_2ndorder_sw}(b), it is clear that high sensing-cost leads to a smaller net sensing interval. The net sensing durations are,
\begin{equation}
    \text{Case 1:} \; \int_0^5 z dt = 3.11\text{s}, \; \text{Case 2:} \; \int_0^5 z dt = 1.175\text{s}
\end{equation}

Figure \ref{fig:RelaxBvp_3p1_2ndorder_sw}(a) depicts some intervals when $z = 0$ for both cases, confirming that indeed $\dot{u} = 0$ when $z = 0$. In Eq.(\ref{eq_sim1}), while $A$ is not Hurwitz, $(A-BK)$ is Hurwitz. Hence, we expect that when the sensing-cost is low, i.e. in Case 1, $x_1$ will be relatively more convergent to $0$ than in Case 2, which applies a high sensing-cost. This is confirmed in Fig.\ref{fig:RelaxBvp_3p1_2ndorder}(a), where Case 2 causes $x_1$ to have a divergent trajectory. This divergence is because the stabilizing feedback $u = -KX$ is used for a shorter duration and almost not used after $t = 2.5$s. Figure \ref{fig:RelaxBvp_3p1_2ndorder} also confirms the correct implementation of the boundary conditions for states and co-states, as given in Eq.(\ref{eq_bc1}).

%%%%%%%%%%%%%%%%%%%%%%%%%%%%%%%%%%%%%%%%%%%%%%%%%%%%%%%%%%%%%%%%%%%%%%%%%%%
% Note: place the figure here to keep it close to the related text
%Figure: example 5.3 
\begin{figure*}[ht]
\begin{center}
\includegraphics[width = 2.0\columnwidth]{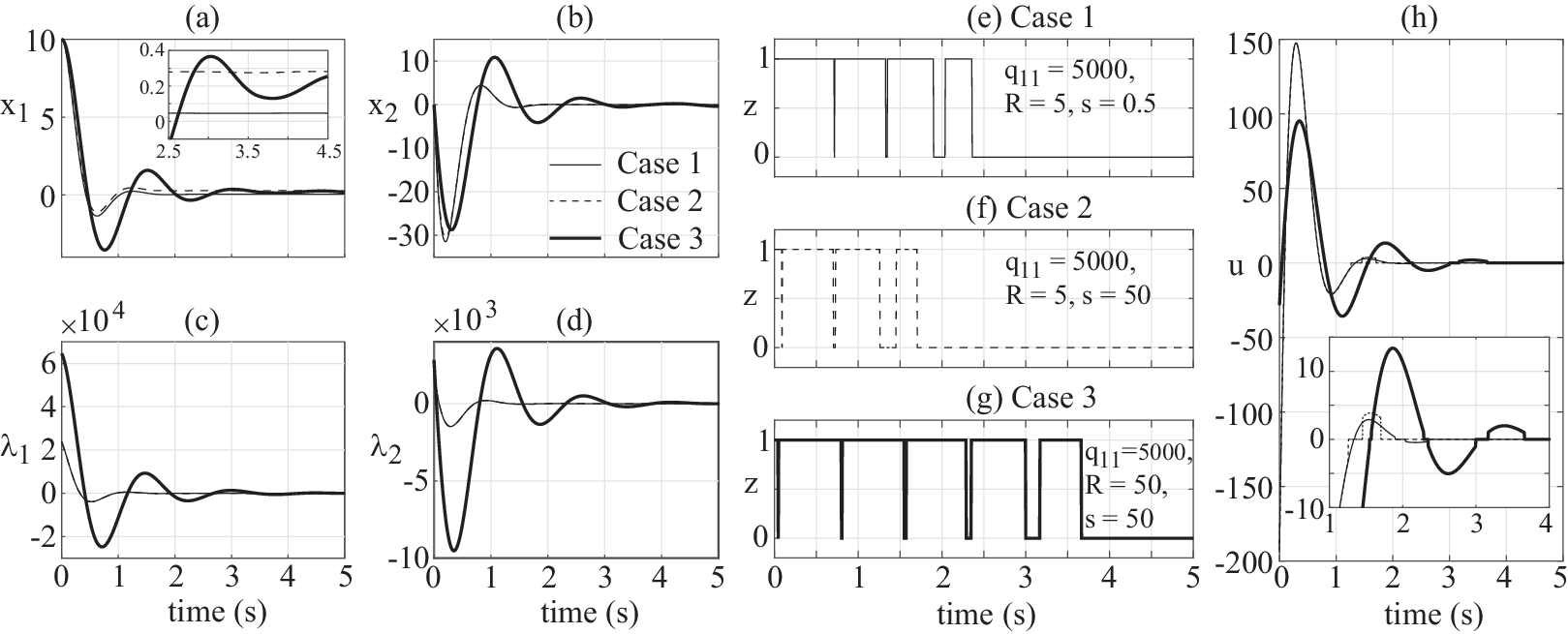}
%\vspace{-.3in}
\caption{Section \ref{section:ex3p3} Simulation: States, Co-states, Switching Variable $z$ and Control Input for Cases: 1, 2 and 3} 
\label{fig:RelaxBvp_3p3_2ndorder}
\end{center}
\end{figure*}

%%%%%%%%%%%%%%%%%%%%%%%%%%%%%%%%%%%%%%%%%%%%%%%%%%%%%%%%%%%%%%%%%%%%%%%%%%%

\subsection{Section \ref{subsection:method2} Example}
\label{section:ex3p2}
In this example, we choose,
\begin{equation}
    A = \left[ 
    \begin{array}{cc}
    0 & 1 \\
    -16 & 1
    \end{array}
    \right], \; 
    B = \left[ 
    \begin{array}{c}
    0 \\
    1
    \end{array}
    \right], \;
    X(0) = \left[ 
    \begin{array}{c}
    10 \\
    0
    \end{array}
    \right], \;
    K  = \left[ 
    \begin{array}{cc}
    -7 & \; 4
    \end{array}
    \right]
    \label{eq_sim3}
\end{equation}
Recall from Section \ref{subsection:method2} that here we set $u = 0$ when $z = 0$ and $u = -KX$ when $z = 1$. We consider three cases to compare results. In all cases we use the same $Q$ matrix. However, the sensing-cost weight $s$ is varied. The weights are:
\begin{equation}
    Q = \left[ 
    \begin{array}{cc}
    500 & 0 \\
    0 & 0
    \end{array}
    \right]\!\!, \; 
    s = \left\{ 
    \begin{array}{rl}
    0.5 & \; \!\!\text{Case 1: Low Sensing-Cost} \\
    50 & \; \!\!\text{Case 2: Interm. Sensing-Cost} \\
    500 & \; \!\!\text{Case 3: High Sensing-Cost} 
    \end{array}
    \right.
    \label{eq_sim4}
\end{equation} 

The TPBVPs for all cases of Eq.(\ref{eq_sim4}) were solved using the ``Relaxation Method''. The simulation results are presented in Fig.\ref{fig:RelaxBvp_3p2_2ndorder}. Figures \ref{fig:RelaxBvp_3p2_2ndorder}(a)-(d) compare the state and co-state trajectories. Figures \ref{fig:RelaxBvp_3p2_2ndorder}(e)-(g) plot the switching state $z$ for the three cases. From Figs.\ref{fig:RelaxBvp_3p2_2ndorder}(e)-(g), it is clear that high sensing-cost $s$ leads to a smaller net sensing interval. The net sensing durations are,
\begin{equation}
\int_0^5 z dt = \left\{ \begin{array}{ll}
    \text{Case 1:} & \;3.41\text{s} \\
    \text{Case 2:} & \;2.12\text{s} \\
    \text{Case 3:} & \;1.65\text{s}
\end{array}
\right.
\end{equation}
The effect of lesser sensing can be seen in the state variables, especially in $x_1$ in Fig.\ref{fig:RelaxBvp_3p2_2ndorder}(a). The inset in Fig.\ref{fig:RelaxBvp_3p2_2ndorder}(a) shows that $x_1$ performance is progressively degraded as $s$ increases. This is due to lesser use of the stabilizing control, $u = -KX$. Note that in this example, $A$ is not Hurwitz but $(A-BK)$ is. Figure \ref{fig:RelaxBvp_3p2_2ndorder} also confirms the correct implementation of the boundary conditions for states and co-states, as given in Eq.(\ref{eq_bc2}). Figure \ref{fig:RelaxBvp_3p2_2ndorder}(h) compares the control inputs for the three cases. The inset figure in Fig.\ref{fig:RelaxBvp_3p2_2ndorder}(h) shows progressively shorter duration of control application from Case 1 to 3.

%%%%%%%%%%%%%%%%%%%%%%%%%%%%%%%%%%%%%%%%%%%%%%%%%%%%%%%%%%%%%%%%%%%%%%%%%%%
\subsection{Section \ref{subsection:method3} Example}
\label{section:ex3p3}
In this example, we choose,
\begin{equation}
    A = \left[ 
    \begin{array}{cc}
    0 & 1 \\
    -16 & 1
    \end{array}
    \right], \quad 
    B = \left[ 
    \begin{array}{c}
    0 \\
    1
    \end{array}
    \right], \quad
    X(0) = \left[ 
    \begin{array}{c}
    10 \\
    0
    \end{array}
    \right]
    \label{eq_sim5}
\end{equation}
Recall from Section \ref{subsection:method3} that here we set $u = 0$ when $z = 0$ and $u = u^*$, Eq.(\ref{eq_uopt}), when $z = 1$. We consider three cases to compare results. In all cases we use the same $Q$ matrix. However, the weights $R$ and $s$ are varied. The weights are:
\begin{equation}
    Q = \left[ 
    \begin{array}{cc}
    5000 & 0 \\
    0 & 0
    \end{array}
    \right]\!\! , \; 
    s = \left\{
    \begin{array}{rl}
    \text{Case 1:} & 0.5 \\
    \text{Case 2:} & 50 \\
    \text{Case 3:} & 50 
    \end{array}
    \right.\!\! , \;
    R = \left\{
    \begin{array}{rl}
    \text{Case 1:} & 5 \\
    \text{Case 2:} & 5 \\
    \text{Case 3:} & 50 
    \end{array}
    \right.\!
    \label{eq_sim6}
\end{equation}

The TPBVPs for all cases of Eq.(\ref{eq_sim6}) were solved using the ``Relaxation Method". The simulation results are presented in Fig.\ref{fig:RelaxBvp_3p3_2ndorder}. Figures \ref{fig:RelaxBvp_3p3_2ndorder}(a)-(d) compare the state and co-state trajectories. Figures \ref{fig:RelaxBvp_3p3_2ndorder}(e)-(g) plot the switching state $z$ for the three cases. The net sensing durations are,
\begin{equation}
\int_0^5 z dt = \left\{ \begin{array}{ll}
    \text{Case 1:} & \;2.195\text{s} \\
    \text{Case 2:} & \;1.48\text{s} \\
    \text{Case 3:} & \;3.395\text{s}
\end{array}
\right.
\label{eq_sim7}
\end{equation}
The integral control effort for the three cases are:
\begin{equation}
\int_0^5 \!\!\!\! z\, u^2 dt = \!\! \int_0^5 \!\!\!\! z \left(-0.5R^{-1}B^T \lambda \right)^2 \!\! dt = \left\{ \begin{array}{ll}
    \!\text{Case 1:} & 1.45\times 10^6 \\
    \!\text{Case 2:} & 1.44\times 10^6 \\
    \!\text{Case 3:} & 7.77\times 10^5
\end{array}
\right.
\label{eq_sim8}
\end{equation}
Note that from Case 1 to 2, the cost of sensing was increased from $s = 0.5$ to $s = 50$. This caused the sensing duration to decrease, as evident from Eq.(\ref{eq_sim7}). Between Cases 1 and 2, note that the integral control effort remains approximately unchanged, as evident from Eq.(\ref{eq_sim8}). From Case 2 to 3, the cost of control was increased from $R = 5$ to $R = 50$. As expected, this led to a reduction in the integral control effort, as seen in Eq.(\ref{eq_sim8}). The state $x_1$ in Fig.\ref{fig:RelaxBvp_3p3_2ndorder}(a) shows an interesting behavior between the three cases. From Case 1 and 2, we observe that steady-state behavior of $x_1$ degrades. This is attributed to the higher sensing-cost $s$ in Case 2. However, from Case 2 to 3, we notice a degradation in the transients. This is due to the increase in $R$ which leads to lower control effort in Case 3. This is also reflected in the control inputs plotted in Fig.\ref{fig:RelaxBvp_3p3_2ndorder}(h). Compared to Cases 1 and 2, the control input signal of Case 3 has lower magnitudes due to a higher $R$ value. The inset figure in Fig.\ref{fig:RelaxBvp_3p3_2ndorder}(h) confirms that the control input switches to zero when $z = 0$.

%%%%%%%%%%%%%%%%%%%%%%%% COMBINED FIGURE FOR SECT 5.4 %%%%%%%%%%%%%%%%%%%%%%%%
% Note: place the figure here to keep it close to the related text
\begin{figure*}[t]
\begin{center}
\includegraphics[width = 6.0in]{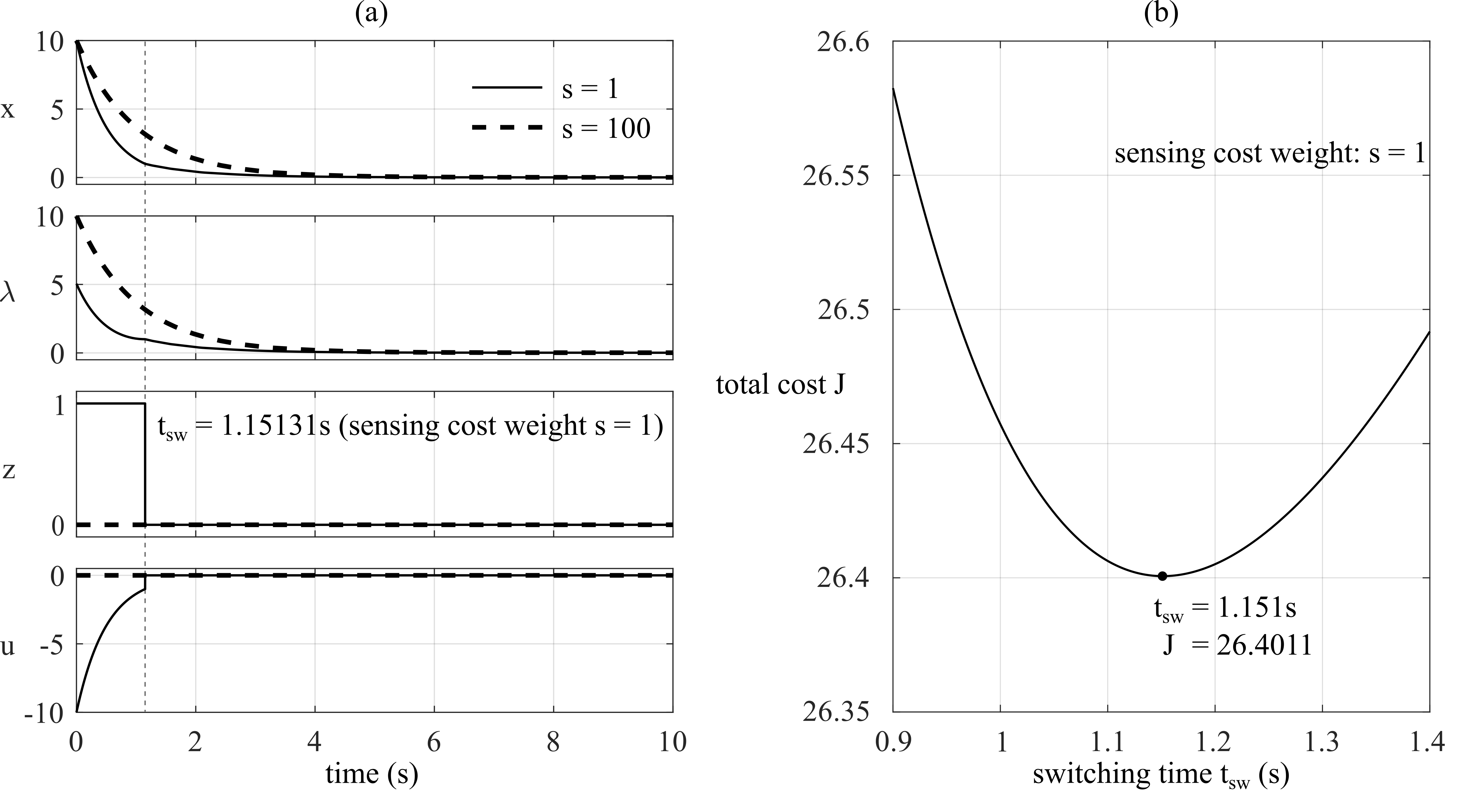}
\vspace{-.1in}
\caption{Numerical solution for Sec. \ref{sec_1st} example}
\label{fig:Sec5_4Result}
\end{center}
\end{figure*}
%%%%%%%%%%%%%%%%%%%%%%%%%%%%%%%%%%%%%%%%%%%%%%%%%%%%%%%%%%%%%%%%%%%%%%%%%%%%%%

%%%%%%%%%%%%%%%%%%%%%%%%%%%%%%%%%%%%%%%%%%%%%%%%%%%%%%%%%%%%%%%%%%%%%%%%%%%
\subsection{Section \ref{sec_1st} Example}
\label{subsection:NumericalSolution}

This section presents simulation results for a first-order system with the following parameters:
\begin{equation}
    \begin{array}{llll}
     a = -1, & \quad b = 1, & \quad k = 1, & \quad x_0 = 10,  \\
     q_1 = 1, & \quad q_2 = 0, & \quad s = 1, & \quad t_f = 10.
    \end{array}
\label{eq_mopa}
\end{equation}
The example uses a built-in function in MATLAB named \textit{bvp4c} to solve the TPBVP numerically. It is implemented using the three-stage Lobatto IIIa formula and the fourth-order forward integration (\cite{Kierzenka2001BVP}). From the analytical result of Eq.(\ref{eqn:t_sw_1order}), the optimal switching time is $t^*_{sw} = 1.15129 \ sec$, which is very close to the result shown in Fig.\ref{fig:Sec5_4Result}(a). The difference between the results is due to: (1) The function \textit{bvp4c} uses an approximate method for time-integration, and (2) The end time of the numerical simulation ($t_f = 10\ s$) is finite.

\begin{comment}
\begin{figure}[t]
\begin{center}
\includegraphics[width = 0.75\columnwidth]{Figures/sensing_cost_1st_order_bvp4c_rev_with_input.pdf}
%\vspace{-.3in}
\caption{\hl{Numerical solution for Sec.} \ref{sec_1st} \hl{example (with $s = 1$)}} 
\label{fig:bvp4c_1storder}
\end{center}
\end{figure}
\end{comment}

Another validation result is illustrated in Fig.\ref{fig:Sec5_4Result}(b), where the system dynamics is directly simulated at varying switching instants. The switching time is swept with each step of $1$ms. The total cost is calculated at each step following Eq.(\ref{eqn:cost_1storder}). The result shows that the optimal switching time is $t_{sw,opt} = 1.151 \ sec$, which matches the results mentioned above. Figure \ref{fig:Sec5_4Result}(a) also shows the optimal solution for $s=100$ obtained numerically. For the parameters chosen and the criterion for the existence of a $t_{sw}$, i.e., $s<-\left({(q_1bk+q_2a)x_0^2}/{a}\right)$, we note that $t_{sw}$ exists if $s < 100$. Accordingly, in the case of $s=100$, the optimal solution should be $z=0\ \forall \,t$. This is in agreement with the solution in Fig.\ref{fig:Sec5_4Result}(a).

\begin{comment}
\begin{figure}[t]
\begin{center}
\includegraphics[width = 0.8\columnwidth]{Figures/sensing_cost_1st_order_bvp4c_tsweep_rev.eps}
%\vspace{-.15in}
\caption{Total cost vs. switching time} 
\label{fig:bvp4c_1storder_tsweep}
\end{center}
\end{figure}

\begin{figure}[t]
\begin{center}
\includegraphics[width = 0.75\columnwidth]{Figures/sensing_cost_1st_order_bvp4c_highS_rev.eps}
%\vspace{-.2in}
\caption{Section \ref{sec_1st} example (with $s = 100$)} 
\label{fig:bvp4c_1storder_highS}
\end{center}
\end{figure}
\end{comment}
%%%%%%%%%%%%%%%%%%%%%%%%%%%%%%%%%%%%%%%%%%%%%%%%%%%%%%%%%%%%%%%%%%%%%%%%%%%%%%%%%%%%%%%%%%%%%%%
\section{A Practical Example: Waste Water Treatment Plant}
\label{section:WWTP}
Human activities are increasing the concentrations of nitrogen and phosphorus in the environment. Excessive levels of these nutrients lead to water pollution, causing adverse effects on human health, environmental degradation, and economic losses \cite{GarciaSanz2017}. Waste water treatment plants (WWTPs) play a critical role in removing nitrogen and phosphorus from water. In these systems, water quality sensors, such as those measuring ammonia and nitrate concentrations, are costly and require regular maintenance. This motivates the application of a sensing cost framework to minimize the total sensor usage time while preserving the performance of the WWTP.

In this section, we consider a linearized WWTP model around the nominal operating point defined by the ammonia and nitrate concentrations. We incorporate sensing cost into an output feedback control scheme to reduce sensor usage time while maintaining control performance near the nominal operating condition.
\subsection{State-space representation}
From \cite{GarciaSanz2017}, the MIMO transfer function matrix of a WWTP can be expressed as:
\begin{equation}
    \left[ 
    \begin{array}{c}
         Y_1(s) \\
         Y_2(s)
    \end{array}
    \right]
    =
    \left[
    \begin{array}{cc}
        p_{w11}(s) & p_{w12}(s)  \\
        p_{w21}(s) & p_{w22}(s) 
    \end{array}
    \right]
    \left[
    \begin{array}{c}
         U_1(s)\\
         U_2(s) 
    \end{array}
    \right]
\end{equation}

Where the outputs $y_1(t)$ and $y_2(t)$ are the concentration of ammonia $(NH_4)$ and nitrates $(NO_3)$ in the effluent, respectively, the input $u_1(t)$ is the air flow injected in the aerobic tank and $u_2(t)$ is the external sludge recirculation flow from settler to anaerobic tank. The transfer function $p_{wij}(s),\ i,j \in \{ 1,2 \}$ are obtained by linearizing the WWTP model about the working point, where the concentrations of ammonia and nitrates are 1 and 4 $g/m^3$, respectively.
\begin{equation}
\begin{aligned}
p_{w11}(s) &= \frac{k_{11}}{1+s/a_{11}} \\
p_{w12}(s) &= \frac{
k_{12}\left( 1 + s/z_{12,1} \right) \left( 1 + s/z_{12,2} \right)
}{
\left( 1 + s/a_{12} \right) \left[
1 + \left( 2\zeta_{12}/\omega_{n12}\right)s +
\left( s/\omega_{n12}\right)^2
\right]
}\\
p_{w21}(s) &= \frac{k_{21}}{1+s/a_{21}}, \quad
p_{w22}(s) = \frac{k_{22}}{1+s/a_{22}}    
\end{aligned}    
\end{equation}

where the parameters are:
\begin{itemize}
    \item $k_{11} = -0.04$, $a_{11} = 7.25\times10^{-5}$.
    \item $k_{12} = -6.239 \times 10^{-4}$, $z_{12,1} = 7.534 \times 10^{-4}$, $z_{12,2} = -3.17 \times 10^{-5}$, $\omega_{n12} = 4.58 \times 10^{-4}$, $\zeta_{12} = 0.8493$.
    \item $k_{21} = 0.0464$, $a_{21} = 1.008\times10^{-4}$.
    \item $k_{22} = -2 \times 10^{-5}$, $a_{22} = 1.67\times10^{-4}$.
\end{itemize}

%%%%%%%%%%%%% FIGURE: WWTP RESULT %%%%%%%%%%%%%%%%%%%%%%%%%%%%%%
\begin{figure*}
    \centering
    \includegraphics[width=0.9\linewidth]{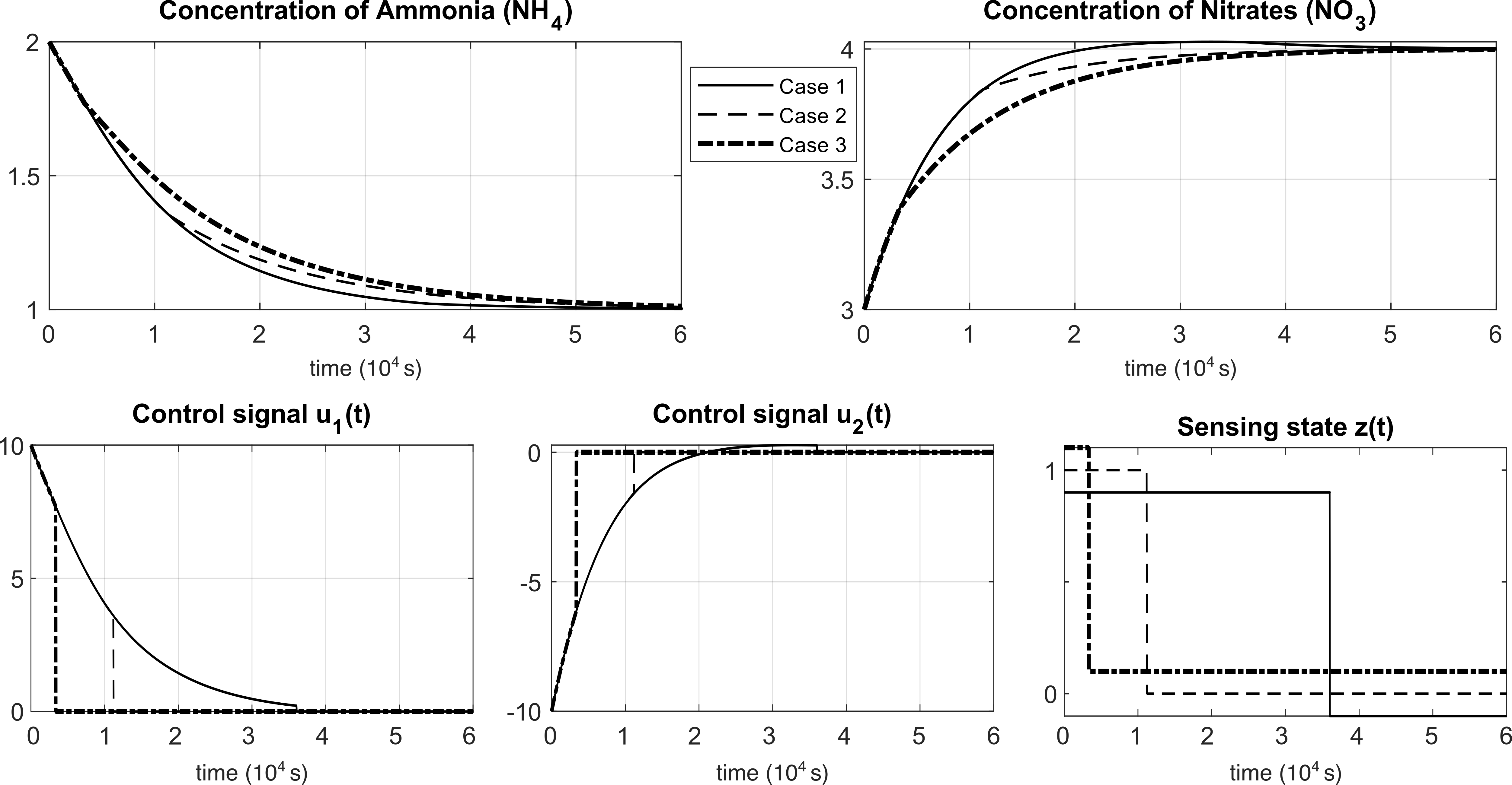}
    \caption{Simulation results for three sensing cost values: low ($s=0.001$), intermediate ($s=0.1$), and high ($s=0.5$). Concentration plots show true concentrations (not outputs $y_1$ and $y_2$), and the sensing state $z(t)$ plots are offset for clarity}
    \label{fig:wwtp_compare}
\end{figure*}
%%%%%%%%%%%%% FIGURE: WWTP RESULT %%%%%%%%%%%%%%%%%%%%%%%%%%%%%%

Because there are three first-order and one third-order transfer functions, a state-space representation with order 6 can be employed to describe this system:
\begin{equation}
    \left\{ 
    \begin{aligned}
        \dot X &= AX + BU \\
        Y &= CX + DU
    \end{aligned}
    \right.
\end{equation}

where:
\begin{subequations}
    \begin{equation*}
        A = \left[ 
        \begin{array}{cccccc}
             -7.25\times10^{-5} & \boldsymbol{0}_{1\times3} & 0 & 0 \\
             \boldsymbol{0}_{3\times1} & A_{12} & \boldsymbol{0}_{3\times1} & \boldsymbol{0}_{3\times1} \\
             0 & \boldsymbol{0}_{1\times3} & -1.008\times10^{-4} & 0  \\
             0 & \boldsymbol{0}_{1\times3} & 0 & -1.67\times10^{-4} 
        \end{array}
        \right]
    \end{equation*}
    \begin{equation*}
        A_{12} = \left[ 
        \begin{array}{ccc}
             0 & 1 & 0  \\
             0 & 0 & 1  \\
             -1.687\times10^{-11} & -2.723\times10^{-7} & -8.584\times10^{-4} \\
        \end{array}
        \right]
    \end{equation*}
    \begin{equation*}
        B = \left[ 
        \begin{array}{cc}
             -2.9\times10^{-6} & 0 \\
             0 & 0 \\
             0 & 0 \\
             0 & 1 \\
             4.677\times10^{-6} & 0 \\
             0 & -3.34\times10^{-9}
        \end{array}
        \right]
    \end{equation*}
    \begin{equation*}
        C = \left[ 
        \begin{array}{cccccc}
             1 & -1.052\times10^{-16} & 3.18\times10^{-12} & 4.406\times10^{-9} & 0 & 0  \\
             0 & 0 & 0 & 0 & 1 & 1
        \end{array}
        \right]
    \end{equation*}
    \begin{equation*}
        D = \boldsymbol{0}_{2\times2}
    \end{equation*}
\end{subequations}

We assume that only the outputs can be measured, therefore, we propose an output-feedback control:
\begin{equation}
    U = -K_y Y = \left[
    \begin{array}{cc}
         K_{y11} & K_{y12}  \\
         K_{y21} & K_{y22} 
    \end{array}
    \right]
    \left[
    \begin{array}{c}
        y_1\\
        y_2
    \end{array}
    \right]
\end{equation}

To determine the gain matrix $K_y$, we first apply relative gain array (RGA) analysis to assess the input--output pairings. The result indicates that, in the low-frequency range, the diagonal pairing (use $u_1$ to control $y_1$ and $u_2$ to control $y_2$) is appropriate for controlling the MIMO system. Based on trial-and-error tuning, the gain matrix is selected as $K_y = -10\boldsymbol{I}_2$. Denote $z \in \left\{ 0, 1 \right\}$ as the sensing state, we will find the optimal sequence $z^*(t)$ to optimize the following cost functional:
\begin{equation}
    J = \int_{t_0}^{t_f}{\left(Y^TQ_yY + sz\right)dt }
    = \int_{t_0}^{t_f}{(X^T \underbrace{C^TQ_yC}_{Q}X + sz)dt }
\end{equation}

By applying the results presented in section \ref{subsection:method2}, we have the co-state differential equation:
\begin{equation}
\begin{aligned}
    \dot\lambda &= 
    %- \frac{\partial H}{\partial X} = 
    - 2C^TQ_yCX - (A - zBK_yC)^T\lambda \\
    &= - 2C^TQ_yY - (A - zBK_yC)^T\lambda
\end{aligned}
\end{equation}

and the switching condition:
\begin{equation}
    z = \left\{ 
    \begin{array}{ccc}
        1 & \text{ when } & s < \lambda^{*T}BK_yY^* \\
        0 & \text{ when } & s > \lambda^{*T}BK_yY^*
    \end{array}
    \right.
\end{equation}

\subsection{Simulation Results}
\begin{comment}
\begin{equation*}
    X_0 = 
    \begin{array}{ccc}
         \left[9.634\times10^{-1}\right. & -3.483\times10^{14} & 1.665\times10^{2} \text{ ...} \\
         2.96\times10^{-1} & -1.117 & \left.1.175\times10^{-1}\right]
    \end{array},
\end{equation*}    
\end{comment}
We use the MATLAB built-in function \textit{bvp4c} to simulate the WWTP model in a time span of $60000$ seconds. The initial condition of the state vector is set at $X_0 = \left[9.634\times10^{-1}\right.$, $-3.483\times10^{14}$, $1.665\times10^{2}$, $2.96\times10^{-1}$, $-1.117$, $\left.1.175\times10^{-1}\right]$, which corresponds to a starting point with the output of $Y(0)~=~\left[1,\ -1
\right]^T$. The chosen output cost matrix is $
Q_y = \boldsymbol{I}_2$. The simulation is conducted for three different values of the sensing cost coefficient $s$:
\begin{equation*}
    s = \left\{ 
    \begin{array}{cl}
        0.001 & \text{Case 1: Low sensing cost} \\
        0.1 & \text{Case 2: Intermediate sensing cost}\\
        0.5 & \text{Case 3: High sensing cost}
    \end{array}
    \right.
\end{equation*}

Figure~\ref{fig:wwtp_compare} and Table~\ref{tab:wwtp_sensing_duration} illustrate the comparison among three scenarios. When the sensing cost is incorporated into the model, a trade-off between output cost and sensing cost can be observed. As the sensing cost increases (i.e., sensing becomes more expensive), the optimal total sensing duration decreases to minimize the overall cost. In addition, as the sensing cost coefficient increases, the output performance degrades, as shown in Fig.~\ref{fig:wwtp_compare}. Specifically, a higher sensing cost leads to a longer settling time, as well as a higher output cost, as indicated in Table~\ref{tab:wwtp_sensing_duration}.

\begin{table}[ht]
    \centering
    \caption{Performance comparison between three cases}
    \begin{tabular}{c|c|c}
         Case & Output cost & Net sensing duration (s) \\ \hline
         1 & 8984 & 36033 \\
         2 & 9288 & 11209 \\
         3 & 11035 & 3364
    \end{tabular}
    \label{tab:wwtp_sensing_duration}
\end{table}

%%%%%%%%%%%%%%%%%%%%%%%%%%%%%%%%%%%%%%%%%%%%%%%%%%%%%%%%%%%%%%%%%%%%%%%%%%%%%%%%%%%%%%%%%%%%%%%
\section{A Shrinking Horizon Implementation}
\label{sec_sh}

%%%%%%%%%%%%%%%%%%%%%%%% Shrinking horizon scheme %%%%%%%%%%%%%%%%%%%%%%%%
\begin{figure}[b]
\begin{center}
\includegraphics[width = 0.95\columnwidth]{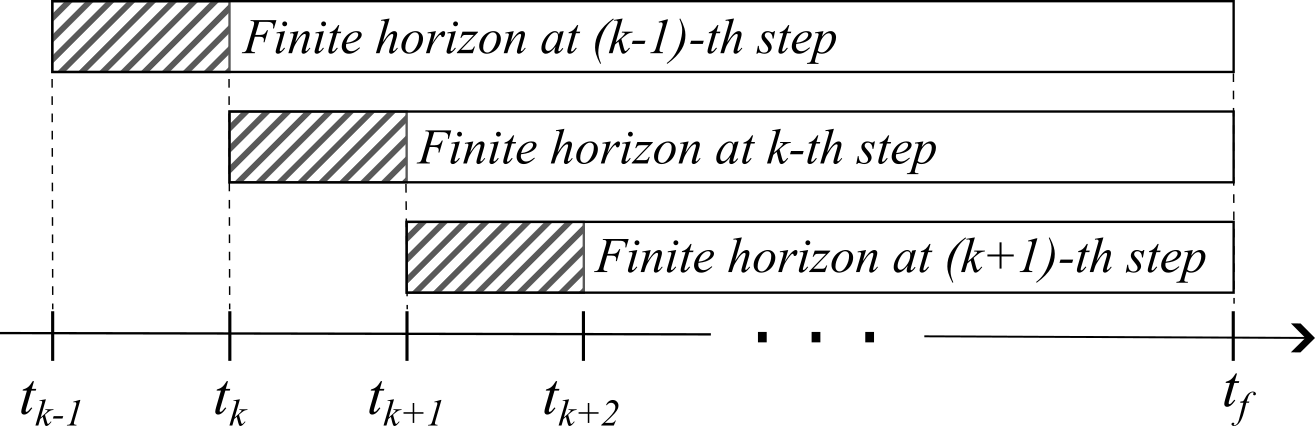}
%\vspace{-.2in}
\caption{Shrinking horizon scheme (adapted from \cite{Capannolo2023})} 
\label{fig:SH_diagram}
\end{center}
\end{figure}

All of the solutions presented in Section \ref{sect:theory} and examples in Section \ref{section:Examples} are considered \textit{offline} solutions, meaning that the optimal sequence of $z$ is pre-calculated based on a known, precise system model and initial conditions. Then, using the pre-calculated switching times, sensing can be switched on or off and inputs accordingly applied. In practice however, a system is always subject to uncertainties, which leads to sub-optimal performance when the offline solution is implemented. This issue is compounded for the problem addressed in Section \ref{subsection:method3}, where the optimal input of Eq.(\ref{eq_uopt}) is not directly a state feedback control. As a result, even when $z=1$, it is not explicit from Eq.(\ref{eq_uopt}) how sensing is helpful. A {\it Shrinking Horizon} (SH) implementation is a practical way to address this issue (\cite{Capannolo2023} and \cite{Skaf2010}). 

In the {\it Shrinking Horizon} method, the time span $\left[t_0, t_f\right]$ is evenly divided into multiple steps. Initially, we assume that the state $X(t_0) = X_0$ is known and we solve the optimal problem by using the numerical methods presented in Section \ref{section:Examples}. If the resulting optimal sensing state at the initial time is $z=1$, we apply the feedback control and the next initial state $X(t_1)$ is known (i.e., sensed). Conversely, if $z=0$, we apply the appropriate control input (i.e. $u = [0]$ or $\dot{u} = [0]$), but do not sense  $X(t_1)$. However, since the system model is given (albeit, may not be accurate), we can integrate the system model to get the calculated state $X_{cal}(t_1)$. We choose this state as the new initial condition and solve the optimal control problem (i.e., the TPBVP) for a time span $\left[t_1, t_f\right]$, which is over a shrinking time horizon. We repeat this process until we reach the final time step $k=N$ (Fig. \ref{fig:SH_diagram}).

We first demonstrate the Shrinking Horizon (SH) implementation for the first-order system introduced in Section~\ref{subsection:NumericalSolution} and Eq.(\ref{eq_mopa}). This allows direct comparison with the analytical solution. The SH is implemented in a time spans of $10$ sec, discretized into $1000$ steps with a sampling time of $0.01$ sec. When the model parameters are perfectly known, it is straightforward that the SH implementation produces results identical to the single-step TPBVP solution.

%%%%%%%%%%%%%%%%%%%%%%%% FIGURE: 1ST ORDER SH %%%%%%%%%%%%%%%%%%%%%%%%
% Note: place the figure here to keep it close to the related text
\begin{figure}[b]
\begin{center}
\includegraphics[width = 0.95\columnwidth]{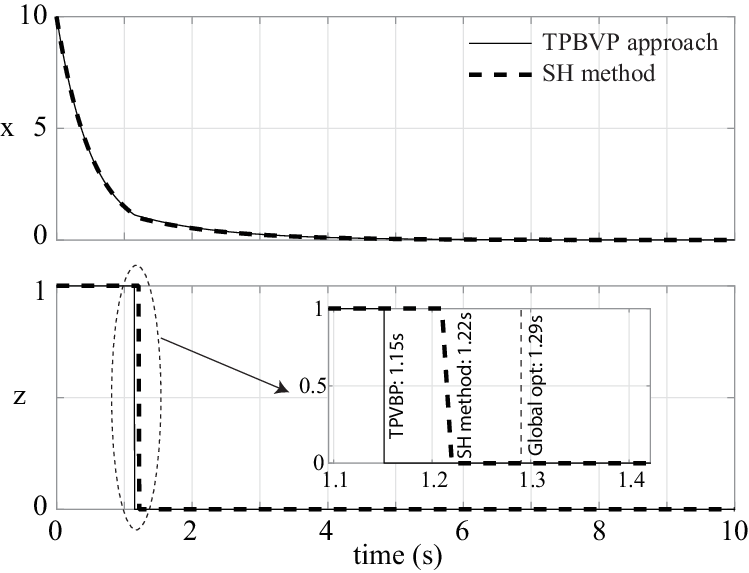}
%\vspace{-.1in}
\caption{Comparison between TBPVP and Shrinking Horizon implementation, first-order system (with model uncertainties $\delta a = 0.2$, $\delta b = 0.1$)}
\label{fig:SH_result_2}
\end{center}
\end{figure}
%%%%%%%%%%%%%%%%%%%%%%%% FIGURE: 2ND ORDER SH %%%%%%%%%%%%%%%%%%%%%%%%
% Note: place the figure here to keep it close to the related text
\begin{figure*}[t]
\begin{center}
\includegraphics[width = 1.8\columnwidth]{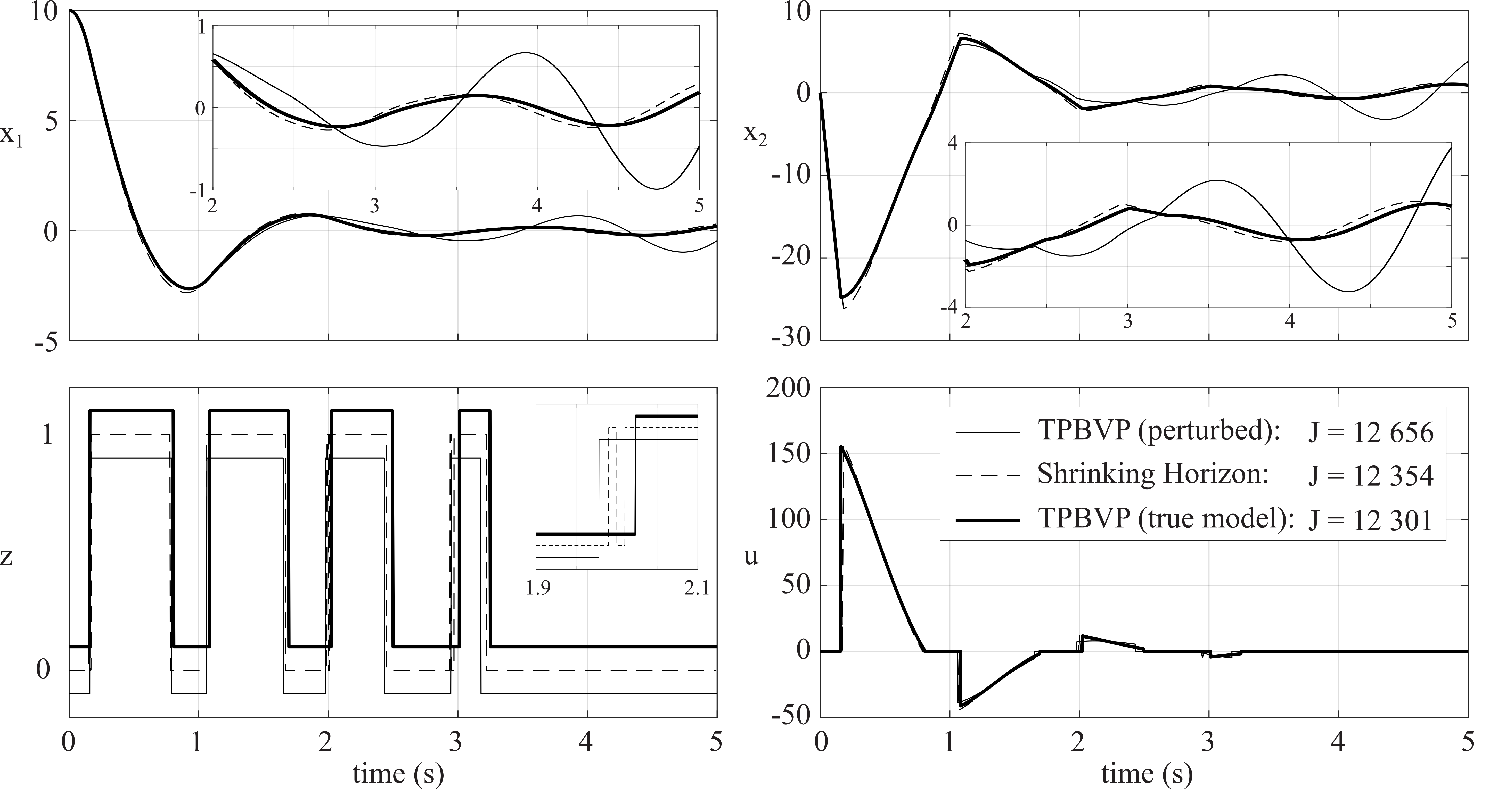}
\vspace{-.1in}
\caption{Comparison between TBPVP and Shrinking Horizon, second-order system with model uncertainty. Note that the z-plots have different offsets for easier distinction}
\label{fig:SH_result_3_2ndorder}
\end{center}
\end{figure*}
%%%%%%%%%%%%%%%%%%%%%%%%%%%%%%%%%%%%%%%%%%%%%%%%%%%%%%%%%%%%%%%%%%%%%%%%%%%%%

Next, we introduce uncertainties in the system. We assume that the \textit{true model} can be formulated as,
\begin{equation}
    \dot{x}_{true}(t) = (a+\delta a)x_{true} + (b+\delta b)u
\end{equation}
In this simulation, the uncertainty parameters are constant bias set to $\delta a = 0.2$, $\delta b = 0.1$. Applying Eq.(\ref{eqn:t_sw_1order}), the optimal switching time is calculated to be $t_{sw} = 1.29$ seconds. As shown in Fig.\ref{fig:SH_result_2}, the switching time for SH implementation and TPVBP methods are at 1.22s and 1.15s, respectively. Since the switching time derived from the SH method is consistent with the global optimum as well as the TPBVP solution, this indicates that the method is suitable for handling practical problems involving switching cost impositions, including ones that involve uncertainties.

%%%%%%%%%%%%%%%%%%% Removed
\begin{comment}
\begin{figure}[t]
\begin{center}
\includegraphics[width = 0.8\columnwidth]{Figures/TPBVP_SH_1st_order_stable_rev.eps}
\vspace{0in}
\caption{Comparison between TBPVP and Shrinking Horizon method, first-order system (no uncertainty)} 
\label{fig:SH_result_1}
\end{center}
\end{figure}
\end{comment}

%%%%%%%%%%%%%%%%%%%%%%%% 2ND ORDER SHRINKING HORIZON %%%%%%%%%%%%%%%%%%%%%%%%

Next, consider the unstable second-order system introduced in Section~\ref{section:ex3p2}, given by Eqs.(\ref{eq_sim3})--(\ref{eq_sim4}) (Case 2: $s = 50$). To demonstrate the adaptivity of the Shrinking Horizon method, the uncertainties of about 3\% to 5\% are introduced into the system matrices $A$ and $B$ as follows:

\begin{equation}
\begin{aligned}
A_{true} &= A + \delta A \\ 
&=  
\left[ \begin{array}{cc}
    0 & 1 \\
    -16 & 1
\end{array} \right]
+
\left[ \begin{array}{cc}
    0 & 0 \\
    0.7 & -0.03
\end{array} \right]
=
\left[ \begin{array}{cc}
    0 & 1 \\
    -15.3 & 0.97
\end{array} \right]
\end{aligned}
\end{equation}

\begin{equation}
B_{true} = B + \delta B =
\left[ \begin{array}{cc}
    0 \\
    1
\end{array} \right]
+
\left[ \begin{array}{cc}
    0 \\
    -0.05
\end{array} \right]
=
\left[ \begin{array}{cc}
    0 \\
    0.95
\end{array} \right]
\end{equation}

In this example, the 5-second time span is divided into $500$ steps, each with a sampling time of $0.01$ sec. Figure \ref{fig:SH_result_3_2ndorder} compares the performance of the single TPBVP approach, the SH implementation, and the TPBVP solution of the true model (used as a reference). The total cost $J$ for each method is as follows: the single TPBVP approach yields $J = 12656$, the SH implementation results in $J = 12354$, and the TPBVP for true model achieves the lowest cost with $J = 12301$. The SH method reduces the cost by 302 units relative to the single TPBVP, recovering over $85\%$ of the performance loss caused by model uncertainty. The state trajectories $x_1(t)$ and $x_2(t)$ show that the SH method effectively compensates for errors introduced by model uncertainty. While the single TPBVP exhibits unstable behavior unsuitable for long-term operation, the SH implementation maintains stability of the state vector $[x_1,x_2]^T$. This performance gain demonstrates the SH method’s ability to better track the optimal trajectory in uncertain environments. 
However, the SH result may exhibit chattering (high-frequency oscillations), as observed at $t = 2\;s$. This occurs when the sensing state $z$ switches from 0 to 1 (or vice versa), potentially causing a mismatch between the guessed state (propagated using the nominal matrices $A$ and $B$) and the sensed state (according to the true model $\dot{X}_{true} = A_{true}X_{true} + B_{true}u$). This behavior should be further investigated to ensure the robustness of the control scheme.
As in the first-order case, this simple example illustrates the feasibility of the SH approach and hints at its potential for more complex systems with switching costs.

While the cases presented here are basic, further investigation across a wider range of systems is needed to fully assess its viability. Nevertheless, these results confirm the basic feasibility of the method and its suitability for the formulation in Sec.~\ref{subsection:method3}, where, whenever $z = 1$, the SH implementation uses the measured $X(t)$ to solve the TPBVP and obtain $\lambda(t)$ for use in Eq.~(\ref{eq_uopt}).

%%%%%%%%%%%%%%%%%%%%%%%%%%%%%%%%%%%%%%%%%%%%%%%%%%%%%%%%%%%%%%%%%%%%%%%%%
\section{Conclusion}
\label{sec_conclu}
%This study presents a novel control design framework for LTI systems, using a binary switching variable, z, to toggle state sensing between active (z=1) and inactive (z=0). For each scenario, distinct control input formulations are examined: during active sensing, we consider predefined stabilizing control feedback and an optimal control formulation, while for inactive sensing, we explore nullifying the input ($u=[0]$) and freezing the input $\dot{u} = [0]$. The switching sequence of z is optimized to minimize a cost functional incorporating penalties on system states, control effort, and sensing duration. This optimization is achieved using Pontryagin’s Maximum Principle to minimize the associated Hamiltonian. 

This study presents a novel optimal control framework for LTI systems that incorporates cost of sensing in its performance index. In this introductory work, a binary switching variable, $z$, toggles state-sensing between active ($z=1$) and inactive ($z=0$). For each scenario, distinct control input formulations are examined. During active sensing, we consider a predefined state-feedback control and an optimal control. During inactive sensing, we explore zero input ($u=[0]$) and constant input $\dot{u} = [0]$. The switching sequence of $z$ is optimized to minimize a cost functional incorporating penalties on system states, control effort, and sensing duration. This optimization is achieved using the Pontryagin’s Minimum Principle.

Various numerical simulations were conducted to validate the proposed framework. In Section \ref{section:ex3p1}, where a stabilizing feedback, $u = -KX$ is employed during active sensing and input is held constant during inactive sensing, very high penalties on sensing-cost resulted in divergent trajectories due to insufficient sensing duration. This observation underscores the importance of carefully selecting the weight of sensing-cost. Section \ref{section:ex3p3} demonstrates that a joint optimization can effectively balance sensing-costs, control efforts, and state stabiliÍzation, ensuring a balanced trade-off between system performance and resource utilization. A reduced-form equation for the infinite-horizon multi-dimensional case with single switching point is derived, and a closed-form solution is obtained for the infinite-horizon first-order case. An example for the first-order system is also provided to validate the derived closed-form solution, and a practical waste water treatment plant example is presented to demonstrate the applicability of the framework to systems with costly, maintenance-intensive sensors.

The resulting TPBVPs are initially solved offline, with optimal sensing schedules pre-computed assuming accurate system models. However, this offline approach may be impractical and yield suboptimal results in the presence of uncertainties. To address this, a Shrinking Horizon approach is taken, solving the optimal control problem at each time step while updating the initial state based on the sensing status. Simulations show that this method achieves switching times comparable to the offline TPBVP solutions, highlighting its potential for practical applications. Finally, we note that the proposed framework is formulated over a fixed finite time horizon and focuses on cost-based optimality rather than asymptotic stabilization. As a result, open-loop unstable systems can be considered over the horizon of interest.
While the current work focuses on LTI systems, the theoretical framework provides a strong foundation for extending the proposed approach to more complex systems. Additional formulations of sensing cost, such as selective sensing, switching between state and output feedback (e.g., via observers), and open-loop estimation, are possible extensions.

\bmsubsection*{Funding}
The authors declare that no external funding was received for this work.

\bmsubsection*{Conflicts of Interest}
The authors declare that they have no known competing financial interests or personal relationships that could have appeared to influence the work reported in this paper.

\bmsubsection*{Data availability statement}
The data that support the findings of this study are available from the corresponding author upon reasonable request.

%\nocite{*}% Show all bib entries - both cited and uncited; comment this line to view only cited bib entries;

%\bibliographystyle{wileyNJD-Chicago-lastoo}

%\bibliography{sensing-cost.bib} %% <=== change this to name of your bib file

\end{document}